\theoremstyle{definition}
\newtheorem{thm}{Theorem}
\newtheorem{cor}{Corollary}
\begin{document}
%
% paper title
% Titles are generally capitalized except for words such as a, an, and, as,
% at, but, by, for, in, nor, of, on, or, the, to and up, which are usually
% not capitalized unless they are the first or last word of the title.
% Linebreaks \\ can be used within to get better formatting as desired.
% Do not put math or special symbols in the title.
\title{Lower Bounds on the Number of Writing Operations
  by ILIFC with Inversion Cells}

% author names and affiliations
% use a multiple column layout for up to three different
% affiliations
\author{\IEEEauthorblockN{Akira Yamawaki, Hiroshi Kamabe
and Shan Lu}
\IEEEauthorblockA{Graduate School of Engineering\\
Gifu University\\
1--1, Yanagido, Gifu, 501--1193\\
Email: yamawaki@kmb.info.gifu-u.ac.jp, kamabe@ieee.org,
shan.lu.jp@ieee.org}}
\if0
\and
\IEEEauthorblockN{Hiroshi Kamabe}
\IEEEauthorblockA{Graduate School of Engineering\\
Gifu University\\
1--1, Yanagido, Gifu, 501--1193\\
Email: kamabe@ieee.org}
\and
\IEEEauthorblockN{Shan Lu}
\IEEEauthorblockA{Graduate School of Engineering\\
Gifu University\\
1--1, Yanagido, Gifu, 501--1193\\
Email: shan.lu.jp@ieee.org}}
\fi

% conference papers do not typically use \thanks and this command
% is locked out in conference mode. If really needed, such as for
% the acknowledgment of grants, issue a \IEEEoverridecommandlockouts
% after \documentclass

% for over three affiliations, or if they all won't fit within the width
% of the page, use this alternative format:
% 
%\author{\IEEEauthorblockN{Michael Shell\IEEEauthorrefmark{1},
%Homer Simpson\IEEEauthorrefmark{2},
%James Kirk\IEEEauthorrefmark{3}, 
%Montgomery Scott\IEEEauthorrefmark{3} and
%Eldon Tyrell\IEEEauthorrefmark{4}}
%\IEEEauthorblockA{\IEEEauthorrefmark{1}School of Electrical and Computer Engineering\\
%Georgia Institute of Technology,
%Atlanta, Georgia 30332--0250\\ Email: see http://www.michaelshell.org/contact.html}
%\IEEEauthorblockA{\IEEEauthorrefmark{2}Twentieth Century Fox, Springfield, USA\\
%Email: homer@thesimpsons.com}
%\IEEEauthorblockA{\IEEEauthorrefmark{3}Starfleet Academy, San Francisco, California 96678-2391\\
%Telephone: (800) 555--1212, Fax: (888) 555--1212}
%\IEEEauthorblockA{\IEEEauthorrefmark{4}Tyrell Inc., 123 Replicant Street, Los Angeles, California 90210--4321}}

% use for special paper notices
%\IEEEspecialpapernotice{(Invited Paper)}

% make the title area
\maketitle

% As a general rule, do not put math, special symbols or citations
% in the abstract
\begin{abstract}
  %The
  Index-less Indexed Flash Code (ILIFC)
  is a coding scheme for flash memories,
  in which one bit of a data sequence is
  stored in a slice consisting of several cells but
  the index of the bit is stored implicitly.
  Although several modified ILIFC schemes have been proposed,
  %in this paper
  in this research
  we consider an ILIFC with inversion cells(I-ILIFC).
  The I-ILIFC reduces the total number of cell level changes
  at each writing request.
  %By computer simulation
  %it is shown that
  Computer simulation is used to show that
  the I-ILIFC improves
  the average performance of the ILIFC in many cases.
  %In this paper,
  This paper presents our derivation of 
  %we derive
  the lower bounds on the number of writing operations by
  I-ILIFC and shows that
  the worst-case performance of the I-ILIFC is
  better than
  %superior to the
  that of
  the ILIFC
  %in the worst case
  if the code length is sufficiently large.
  Additionally,
  we consider the tight lower bounds
  %on
  thereon.
  %that.
  %the number of writing operations.
  The results show that
  the threshold of the code length
  %which
  that
  determines whether the I-ILIFC improves
  the worst-case performance of the ILIFC
  is smaller than that in the first lower bounds.
\end{abstract}

% no keywords

% For peer review papers, you can put extra information on the cover
% page as needed:
% \ifCLASSOPTIONpeerreview
% \begin{center} \bfseries EDICS Category: 3-BBND \end{center}
% \fi
%
% For peerreview papers, this IEEEtran command inserts a page break and
% creates the second title. It will be ignored for other modes.
\IEEEpeerreviewmaketitle

\section{Introduction}
% no \IEEEPARstart
% You must have at least 2 lines in the paragraph with the drop letter
% (should never be an issue)
%フラッシュメモリでは，セルと呼ばれる記憶素子に
%電荷を蓄えることで情報を記録する．
In flash memory,
data bits are stored in cells
%as charge levels.
in the form of charge levels.
%フラッシュメモリの特徴のひとつとして，
%セルに対する操作の非対称性がある．
One of the most notable characteristics of flash memory
is the asymmetricity of charging and discharging operations.
%セルに電荷を注入する(電位レベルを上げる)操作は，
%セル単位で行うことができるが，
%セルから電荷を除去する(電位レベルを下げる)操作は
%セル単位ではできず，
That is,
the charge level of the cell can be increased
in a cell-by-cell manner
but cannot be decreased
%cell by cell.
in this manner.
%$n$個のセルが集まった
%消去ブロック単位で行わなければならない．

%消去ブロックを構成するセルの個数は非常に大きく，
%$2^{20}$から$2^{24}$である\cite{Blo}．
%消去ブロック中にあるすべてのセルの電荷を消去する操作は，
%ブロック消去と呼ばれる．
Instead,
discharging is achieved by way of 
%There is
a special operation
%called
known as
%a
block erasure,
%for
%discharging.
%The block erasure operation
which
discharges the cells
in a long block simultaneously.
%ブロック消去を実行することによってセルは
%物理的に劣化し，
%誤り確率が高くなる．
%The block erasure
The disadvantage of the block erasure operation
is that it
%slightly
partially
destroys cells in the flash memory
and thus
increases the error probability.
%そのため，誤り訂正符号が使用されているが，
%ブロック消去の実行可能な回数には制限がある．
%Therefore
This necessitates the use of
an error correcting code.
%is used
%but
However,
the cells
%will be very
invariably become highly
unreliable after
%executing the
block erasure is executed a certain number of times.

%そこで，フラッシュメモリの寿命を延ばすために，
%ブロック消去の実行をできるだけ回避して記録する方法が
%提案された\cite{Jia1}\cite{Jia2}．
%A flash code was proposed in order to
This led to the proposal of a flash code
to
reduce the number
%of times
of block erasure
operations\cite{Jia1,Jia2}.
%そのような記録のための変換方式を一般に，
%フラッシュ符号と呼ぶ．
%フラッシュ符号の1つに，Index-less Indexed Flash Codes(ILIFC)
%がある．
Mahdavifar et al.
%proposed
addressed the problem by proposing
the index-less indexed flash code (ILIFC).
The ILIFC is designed in terms of the worst-case performance\cite{Mah}.
%ILIFCは，2009年にMahdavifarらによって提案された符号であり，
%最悪時性能の観点から設計されている\cite{Mah}．
%The ILIFC is designed in terms of the worst case performance\cite{Mah}.
In the ILIFC,
both
the value of one bit of data
%bits
and the index of the bit
are
stored in one slice.
%consisting of several cells.
%Although the code rate of the ILIFC is not good,
The ILIFC uses the cell state space very efficiently
even though the code rate is not optimal.
%the encoding and decoding procedures are simple.

%ILIFCを改良したさまざまな方式が
%提案されている\cite{Ohk}\cite{Kaj2}\cite{Wad}．
%There are
Several modified ILIFC schemes
%which improve
capable of improving
the performance of the ILIFC
have since been proposed
\cite{Ohk,Kaj2,Wad}.
\if0
A scheme
%which
that
changes the order
%of the increase of
in which
the levels
in the slice
are increased
was proposed\cite{Wad}.
Additionally,
in order to enlarge the code rate of the ILIFC,
a scheme
%which
that
reduces the size of slices was proposed\cite{Kaj2}.
\fi
%その1つに，反転セルを用いたILIFCがある\cite{Ohk}．
In this paper,
we consider an ILIFC with inversion cells (I-ILIFC)\cite{Ohk}.
%\cite{Ohk}では，
%ILIFCに比べてI-ILIFCの平均性能が良くなることが
%計算機実験によって示されている．
The I-ILIFC reduces the total number of cell level changes
at each writing request
in order to increase the number of writing operations
between two consecutive block erasures.
%By computer simulation it is shown that
Computer simulation was used to show that
the I-ILIFC improves the average performance of the ILIFC
in many cases\cite{Ohk,Yam}.
%次に，反転セルの個数を適切に決定することで，
%I-ILIFCの
%最悪時書き換え回数の下界を導出する．
%そして，
%ILIFCの最悪時書き換え回数の上界と比較し，
%符号長が十分大きいとき，
%I-ILIFCのほうが最悪時性能が良くなることを示す．

%In this paper,
This work theoretically shows that
the worst-case performance of the I-ILIFC is
better than
%superior to
that of the ILIFC.
%if the code length is larger than a certain value.
%In the research presented in this paper,
Firstly,
we derive the lower bounds on the number of writing operations
by I-ILIFC
and
%give the
specify a
threshold
%of
for
the code length that determines whether
the I-ILIFC improves the worst-case performance of the ILIFC.
The results
%show that
%and theoretically
show that the I-ILIFC is
%better than
%superior to
better than
%the worst case performance of
%that of
the ILIFC
in the worst case
if the code length is sufficiently large.

Additionally,
we consider
%the
unusual writing
in addition to the usual writing by the I-ILIFC
and
%find
determine
the tight lower bounds on the number of writing operations.
Consequently,
we show that
the threshold is smaller than that in the first lower bounds.

%\hfill mds
 
%\hfill August 26, 2015

\section{Index-less Indexed Flash Code(ILIFC)}

In this work,
it is assumed that
the level of electric charge in a cell of a NAND flash memory
is in the range $A_q = \{0,1,\cdots,q-1\}$.
A block of data bits of length $k$ is encoded and stored in
a block of cells of length $n$.
An ILIFC that satisfies these conditions is denoted by
ILIFC$(n,k,q)$.

In the ILIFC
the block of cells of length $n$ is divided into slices consisting of
$k$ cells and, therefore,
the number of slices in the block is $m = \lfloor n/k \rfloor$.
If $n$ is not a multiple of $k$,
the remaining cells in the block are unused.
Each slice represents one bit of the $k$ data bits.
Since $k$ slices are used to store $k$ data bits,
we require $m \geq k$, that is, $n \geq k^2$.

%For the $q$-ary sequence of length $l$
%$\bm{a} = (a_1,a_2,\cdots,a_l) \in A_q^{l}$,
%we define $wt(\bm{a}) = \sum_{i=1}^{l} a_i$ and
%$bv(\bm{a}) = wt(\bm{a}) \bmod 2$.

The state of $m$ slices is denoted by
$(\bm{x}_1 \mid \bm{x}_2 \mid \cdots \mid \bm{x}_m)$,
where $\bm{x}_j \in A_q^{k}$ for $1 \leq j \leq m$.
For a slice $\bm{x} = (x_1,x_2,\cdots,x_k)$,
we define $wt(\bm{x}) = \sum_{i=1}^{k} x_i$ and
$bv(\bm{x}) = wt(\bm{x}) \bmod 2$.
$wt(\bm{x})$ is termed the weight of the slice $\bm{x}$.
A slice $\bm{x} = (x_1,x_2,\cdots,x_k)$ is said to be full
and to be empty
if $x_1 = x_2 = \cdots = x_k = q-1$ and
if $x_1 = x_2 = \cdots = x_k = 0$, respectively.
The slice is said to be active if
it is neither full nor empty.

In the ILIFC,
the value of the $i$-th bit in the $k$ data bits and the index $i$ of the bit
are stored in a slice as follows (See \cite{Mah} for details). 
In the initial state,
it is assumed that
all slices are empty and
all data bits are $0$.

Suppose that the value of the $i$-th bit is changed.
%If there is not a slice representing
If none of the slices represent
the $i$-th bit,
an empty slice is reserved for the bit
and then the level of the $i$-th cell in the slice is changed to $1$.
In the case that
%there is no empty slice,
no empty slices exist,
%the
block erasure
%takes place.
is incurred.

On the other hand,
if there is a slice representing the $i$-th bit,
the weight of the slice is increased by $1$.
In the beginning,
the level of the $i$-th cell in the slice is increased.
If the level of the $i$-th cell is $q-1$,
the level of the $i'$-th cell is increased
where $i' = (i \bmod k) + 1$.
Similarly,
if the level of the $i'$-th cell is also $q-1$,
the level of the $i''$-th cell is increased
where $i'' = (i' \bmod k) + 1$.
%From this procedure,
This procedure enables 
%for the active slice $\bm{x}$,
the value of the bit,
which
is represented by $bv(\bm{x})$,
to be obtained for the active slice $\bm{x}$.
Additionally,
the index of the bit is represented by
the position of the first updated cell in $\bm{x}$.
This updating procedure is performed until the slice
%gets full.
is filled to capacity.

Note that any full slice cannot represent the index.
In the ILIFC the value of
a bit
%which has no corresponding slice
without any corresponding slice
is considered
%as
to be
$0$.
Therefore,
for the full slice $\bm{x}'$,
$wt(\bm{x}') = k(q-1)$ should be even.
%For this reason,
Thus,
%in this paper
in this work
it is assumed that
$k$ or $q-1$ is even.

%From a state of slices
The state of slices
$(\bm{x}_1 \mid \bm{x}_2 \mid \cdots \mid \bm{x}_m)$
enables
the $k$ bit data $(s_1,s_2,\cdots,s_k)$
%is
to be
obtained as follows.
For each $i$,
$s_i = bv(\bm{x}_j)$ if there is a slice
$\bm{x}_j$ representing the $i$-th bit;
otherwise,
$s_i = 0$.
%otherwise.
The function
%which
that
maps $(\bm{x}_1 \mid \bm{x}_2 \mid \cdots \mid \bm{x}_m)$
to $(s_1,s_2,\cdots,s_k)$ is denoted by
$\mathcal{D}_s(\bm{x}_1 \mid \bm{x}_2 \mid \cdots \mid \bm{x}_m)$.

Usually,
a single writing operation to
%One writing of a
flash memory
%usually means
involves
%a
changing a
single data bit stored in the memory\cite{Mah,Jia2}.
However,
%in this paper
in this research,
one writing operation
%means
entails
%a
updating
%the whole
all the
cells
%so that
such that
the resulting cells represent the new data.
If the new data is equal to the current data,
then we assume that no writing operation
%occurs.
has occurred.
The number of writing operations
%which
that
can
%be done
occur
between two
consecutive block erasures is simply
%called
referred to as
the number of writings.
The number of writings depends on the sequence of data to be stored.
The minimum number of writings is
%called
termed
the worst-case number of writings.

Assume that
the state of $m$ slices
$(\bm{x}_1 \mid \bm{x}_2 \mid \cdots \mid \bm{x}_m)$
is changed into
$(\bm{x}_1' \mid \bm{x}_2' \mid \cdots \mid \bm{x}_m')$
by one writing, where
$\bm{x}_j, \bm{x}_j' \in A_q^k$ for $1 \leq j \leq m$.
Then
$\sum_{j=1}^m \left( wt(\bm{x}_j') - wt(\bm{x}_j) \right)$ is
%called
termed
the total number of changes in the cell level.
%cell level changes.

\section{ILIFC with inversion cells}

Suppose that
%the writing which
a writing operation in which
the current data $\bm{v}$ is changed into
the new data $\bm{v}'$ is
%done
conducted
by ILIFC.
If such a writing can be
%done
achieved
without block erasure,
the total number of cell level changes is equal to
the Hamming distance between $\bm{v}$ and $\bm{v}'$.
An ILIFC with inversion cells (I-ILIFC)
was proposed
in order to reduce the total number of cell level changes\cite{Ohk}.
The I-ILIFC has
two storing modes,
a normal mode and an inverted mode,
%The inversion cells contain
%information about the storing mode.
information about which is contained in the inversion cells.

%In this paper
In this research
it is assumed that
%a
$k$ bit data
%is
are
stored
%into
in
a block of $n$ $q$-ary cells including $r$ inversion cells.
Such an I-ILIFC is denoted by I-ILIFC$(n,k,q,r)$.
In the I-ILIFC$(n,k,q,r)$,
a block of $(n-r)$ cells except for the $r$ inversion cells is
divided into slices consisting of $k$ cells.
These cells,
which are divided into slices,
are termed
%is called
data cells.
Hence,
there are $m = \lfloor (n-r)/k \rfloor$ slices.
%From
The restriction of the ILIFC scheme,
$m \geq k$,
%that is,
determines that
$n \geq k^2 + r$ should hold.

%For a binary sequence of length $l$
For 
$\bm{w} = (w_1,w_2,\cdots,w_l) \in \{0,1\}^l$,
we define
$\overline{\bm{w}} = (\overline{w_1},\overline{w_2},\cdots,\overline{w_l})$,
where $\overline{w_i}$ is $1$ if $w_i = 0$, and
$0$ if $w_i = 1$.
For $\bm{w},\bm{w}' \in \{0,1\}^l$,
let $d_H(\bm{w},\bm{w}')$ be the Hamming distance between
$\bm{w}$ and $\bm{w}'$.

%The I-ILIFC has
%two storing modes,
%a normal mode and an inverted mode.
In the I-ILIFC,
the storing mode is represented by $r$ inversion cells.
We denote the state of these $r$ inversion cells by
$\bm{b} = (b_1,b_2,\cdots,b_r) \in A_q^r$.
We denote the state of the inversion cells and $m$ slices by
$\bm{c} = (\bm{b} \mid \bm{x}_1 \mid \bm{x}_2 \mid \cdots \mid \bm{x}_m)$,
where $\bm{x}_j \in A_q^{k}$ for $1 \leq j \leq m$.
Suppose that the data $\bm{v} \in \{0,1\}^{k}$ is stored in
the cell state $\bm{c}$.
If $bv(\bm{b}) = 0$,
the cell is in the normal mode and
$\mathcal{D}_s(\bm{x}_1 \mid \bm{x}_2 \mid \cdots \mid \bm{x}_m) =
\bm{v}$ is satisfied.
If $bv(\bm{b}) = 1$,
the cell is in the inverted mode and
$\mathcal{D}_s(\bm{x}_1 \mid \bm{x}_2 \mid \cdots \mid \bm{x}_m) =
\overline{\bm{v}}$ is satisfied.
If
there is an $i$
%which
that
satisfies $b_i < q-1$,
the mode is changed by increasing
$b_i$ by $1$.

Assume that
the state $(\bm{b} \mid \bm{x}_1 \mid \bm{x}_2 \mid \cdots \mid \bm{x}_m)$
is changed into
$(\bm{b}' \mid \bm{x}_1' \mid \bm{x}_2' \mid \cdots \mid \bm{x}'_m)$
by one writing operation.
Then $\sum_{j=1}^m \left( wt(\bm{x}_j') - wt(\bm{x}_j) \right)$
is
%called
termed
the sum of the data cell level changes
and $wt(\bm{b}') - wt(\bm{b})$ is
%called
termed
the sum of the inversion cell level changes.
The sum of these two values is
%called
referred to as
the total number of cell level changes.

In the I-ILIFC,
when a writing operation is
%done
executed,
one of two modes is selected
%so that
such that
the total number of cell level changes is minimized.
The following theorem holds\cite{Ohk}.

\begin{thm}
\label{rewritingrules}
Suppose
%that
a writing operation,
in
which the current data $\bm{v}$ are changed into
the new data $\bm{v}'$, is
%done
carried out
by I-ILIFC$(n,k,q,r)$, where
$\bm{v},\bm{v}' \in \{0,1\}^{k}$.
In order to minimize the total number of cell level changes,
the storing mode is changed by
%the
writing
if and only if $d_H(\bm{v},\bm{v}') > (k+1)/2$.
\end{thm}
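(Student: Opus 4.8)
The plan is to compare, for a single writing request that changes the stored data from $\bm{v}$ to $\bm{v}'$, the total number of cell level changes under the only two options available to the I-ILIFC: keeping the current storing mode or switching it. I would show the second option is strictly cheaper exactly when $d_H(\bm{v},\bm{v}') > (k+1)/2$; since a writing operation always selects the mode minimizing the total number of cell level changes, this yields the claimed equivalence. Throughout I assume the write is completable without block erasure under whichever option is taken --- the regime in which, as recalled at the start of Section III, the cost of an ILIFC write equals a Hamming distance (if one option forces block erasure it is simply infeasible, and the statement is understood in that regime).

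First I would compute the cost of keeping the current mode. In the normal mode the data cells must represent $\bm{v}'$ while they currently represent $\bm{v}$; in the inverted mode they must represent $\overline{\bm{v}'}$ while currently representing $\overline{\bm{v}}$. In either case the represented vector moves a Hamming distance $d_H(\bm{v},\bm{v}') = d_H(\overline{\bm{v}},\overline{\bm{v}'})$; by the ILIFC update rule each flipped bit costs exactly one cell level change and each unflipped bit costs none, and the inversion cells can be left untouched (any parity-preserving change of $wt(\bm{b})$ is $0$ or at least $2$). Hence keeping the mode costs exactly $d_H(\bm{v},\bm{v}')$, and this is optimal among all mode-preserving writes, since every term of the count is nonnegative.

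Next I would compute the cost of switching the mode. Flipping the parity $bv(\bm{b})$ requires increasing $wt(\bm{b})$ by an odd amount, whose minimum is $1$ (increment one $b_i$ with $b_i < q-1$). After switching, the data cells must represent $\overline{\bm{v}'}$ if the mode went from normal to inverted, or $\bm{v}'$ if it went from inverted to normal, while they currently represent $\bm{v}$ in the former case and $\overline{\bm{v}}$ in the latter; since $d_H(\bm{v},\overline{\bm{v}'}) = d_H(\overline{\bm{v}},\bm{v}') = k - d_H(\bm{v},\bm{v}')$, the data-cell part costs exactly $k - d_H(\bm{v},\bm{v}')$ by the same per-bit argument. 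Thus switching costs exactly $1 + k - d_H(\bm{v},\bm{v}')$.

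Comparing the two totals, switching is strictly cheaper iff $1 + k - d_H(\bm{v},\bm{v}') < d_H(\bm{v},\bm{v}')$, that is, iff $d_H(\bm{v},\bm{v}') > (k+1)/2$; when $d_H(\bm{v},\bm{v}') = (k+1)/2$ (possible only for odd $k$) the two costs coincide, so declining to switch is still cost-optimal and the stated rule remains consistent. I expect the only delicate point to be the per-bit bookkeeping of an ILIFC update --- confirming that a bit flip costs exactly one cell level change whether it reserves a fresh empty slice, increments an already active slice, or pushes a slice to being full --- but this is exactly the Hamming-distance identity invoked at the opening of Section III, so it may be used rather than reproved.
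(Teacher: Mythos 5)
Your proposal is correct and follows essentially the same approach as the paper's proof: compute the cost $d$ of keeping the mode and the cost $k-d+1$ of switching (one inversion-cell increment plus $d_H(\bm{v},\overline{\bm{v}'})=k-d$ data-cell changes), then compare. Your extra care about the tie case $d=(k+1)/2$ and the parity of the inversion-cell increment is a welcome refinement but does not change the argument.
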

\begin{proof}
%From Theorem \ref{rewritingrules},
We denote $d_H(\bm{v},\bm{v}')$ by $d$.
When the mode is changed by
%the
writing,
the sum of the data cell level changes is
$d_H(\bm{v},\overline{\bm{v}'}) = d_H(\overline{\bm{v}},\bm{v}')
= k - d$
and
the sum of the inversion cell level changes is $1$.
Hence, the total number of cell level changes is $(k-d+1)$.
%if $d > (k+1)/2$,
On the other hand, when the mode is not changed by
%the
writing,
the total number of cell level changes is equal to
the sum of the data cell level changes,
$d_H(\bm{v},\bm{v}') = d_H(\overline{\bm{v}},\overline{\bm{v}'}) = d$.
%if $d \leq (k+1)/2$ where $d = d_H(\bm{v},\bm{v}')$.
%The sum of inversion cell level changes is $1$ if $d > (k+1)/2$,
%and $0$ if $d \leq (k+1)/2$.
Therefore, if $d > k-d+1$, that is, $d > (k+1)/2$,
the writing
operation
that changes the mode is selected
%so that
such that
the total number of cell level changes is minimized.
Additionally,
if $d \leq (k+1)/2$,
%it can be seen that
the above discussion shows that
%the
a
writing
operation
that does not change the mode is selected.
\end{proof}

For the state of $r$ inversion cells $\bm{b} = (b_1,b_2,\cdots,b_r) \in A_q^r$,
the inversion cells are said to be
%used up
exhausted
if $b_1 = b_2 = \cdots = b_r = q-1$, that is, $wt(\bm{b}) = r(q-1)$.
Then
%the
writing
%which
that
does not change the storing mode
%is done
occurs
until the next block erasure takes place.

\section{Average performance of I-ILIFC}
In this section,
computer simulation is used to show that
%it is shown that
%in many cases
the average number of writings by I-ILIFC$(n,k,q,r)$
is greater than that by ILIFC$(n,k,q)$ in many cases
%except in the case of $n = ak^2$ where $a$ is an integer, $a \geq 2$,
when the length of inversion cells $r$
is optimized\cite{Ohk,Yam}.

%From
The restriction of the ILIFC scheme
determines that
$n-r \geq k^2$ should be satisfied.
If $(n-r) \bmod k \not= 0$,
there are data cells
%which
that
will never be used for the slice.
Therefore, we consider only values of $r$
%which
that
%satisfies
satisfy
$n-r \geq k^2$ and $(n-r) \bmod k = 0$.
%For given $n$, $k$ and $q$,
%we calculate the average number of writings by I-ILIFC$(n,k,q,r)$
%for each $r$ by computer simulation.

For example,
the average
%performance of
number of writings by
I-ILIFC$(640,16,4,r)$ and
I-ILIFC$(192,8,8,r)$ for each $r$
%by computer simulation
are shown in Fig. \ref{fig1} and Fig. \ref{fig2},
respectively.
Note that
I-ILIFC$(n,k,q,0)$ is equivalent to ILIFC$(n,k,q)$.
In our simulations,
the average number of writings was calculated
after $10,000$ block erasures took place.
\begin{figure}[t]
  \begin{picture}(80,140)(0,0)
  \put(10,10) {\includegraphics[width=7.5cm]{./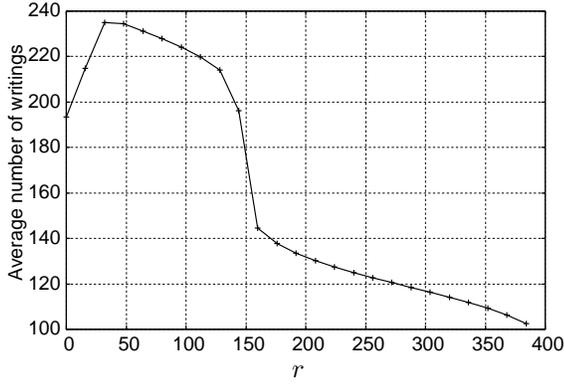}}
  \put(120,5){$r$}
  \end{picture}
  \caption{Average number of writings by I-ILIFC$(640,16,4,r)$}
  \label{fig1}
\end{figure}
\begin{figure}[t]
  \begin{picture}(80,140)(0,0)
  \put(10,10) {\includegraphics[width=7.5cm]{./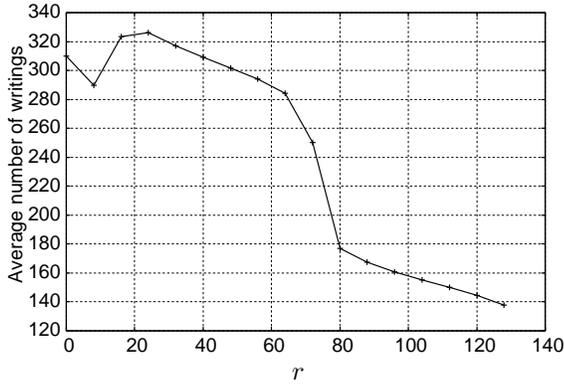}}
  \put(120,5){$r$}
  \end{picture}
  \caption{Average number of writings by I-ILIFC$(192,8,8,r)$}
  \label{fig2}
\end{figure}
It can be seen that
the average performance is maximized
at $r = 32$ for I-ILIFC $(640,16,4,r)$ and
at $r = 24$ for I-ILIFC$(192,8,8,r)$.
Similarly the average performance of
I-ILIFC$(288,12,4,r)$ is shown in Fig. \ref{fig3}.
\begin{figure}[t]
  \begin{picture}(80,140)(0,0)
  \put(10,10) {\includegraphics[width=7.5cm]{./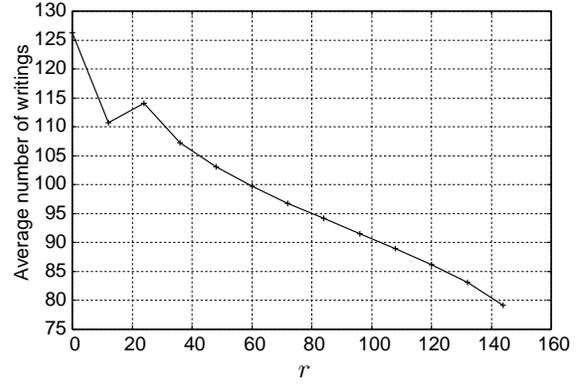}}
  \put(120,5){$r$}
  \end{picture}
  \caption{Average number of writings by I-ILIFC$(288,12,4,r)$}
  \label{fig3}
\end{figure}
%From
This result
indicates that
the performance of I-ILIFC$(288,12,4,r)$
is maximized at $r = 0$.
%In the case of $r = 0$,
%all $n$ cells are divided into slices
%and
%the mode can not be selected by each writing.
That is,
the I-ILIFC does not improve the performance of the original ILIFC.
%We would like to show this in theory,
%but
%the analysis of the average performance is difficult in general.
%Therefore

%In this paper
In this research
we analyze the worst-case performance
and
%give the
specify a
threshold that
determines whether the I-ILIFC improves
the worst-case performance of the ILIFC.
The results
%theoretically
show that the I-ILIFC is
better than
%superior to
the ILIFC
if the code length is sufficiently large.

\section{Upper bound on the worst-case number of writings by ILIFC}

%ILIFCとI-ILIFCの最悪時性能を比較するために，
%本節では，
%ILIFC$(n,k,q)$の最悪時書き換え回数$t_{w}$の上界を導出する．
Under the definition of one writing operation in \cite{Mah},
it is shown that
the worst-case number of writings by
ILIFC$(n,k,q)$ is $k(\lfloor n/k \rfloor -k+1)(q-1) + k-1$\cite{Mah}.
Let $t_{w}$ be the worst-case number of writings by
ILIFC$(n,k,q)$ under the definition of one writing in this research.
In this section,
%in order to compare the worst-case performance of ILIFC and I-ILIFC,
we derive the upper bound on $t_w$.

%$\bm{0} = (0,0,\cdots,0),
%\bm{1} = (1,1,\cdots,1)$とする．
We denote $(0,0,\cdots,0)$ and $(1,1,\cdots,1)$ by $\bm{0}$ and $\bm{1}$,
respectively.
%ILIFC$(n,k,q)$によって，
%情報系列を$\bm{1},\bm{0},\bm{1},\bm{0},\cdots$
%と変更したときの書き換え回数を$T$とする．
Let $T$ be the number of writings by ILIFC$(n,k,q)$ when
the data sequence is $\bm{1},\bm{0},\bm{1},\bm{0},\cdots$.
%このとき，
%$t_{w} \leq T$が成り立つ．
Then $t_{w} \leq T$ holds.
%ブロック消去直後から，
%情報系列を$\bm{1},\bm{0},\bm{1},\bm{0},\cdots$と
%$T$回書き換えた直後のスライスの状態を
%$(\bm{y}_1 \mid \bm{y}_2 \mid \cdots \mid \bm{y}_m)$
%とすると，
If the state of slices after such $T$ writings is
$(\bm{y}_1 \mid \bm{y}_2 \mid \cdots \mid \bm{y}_m)$,
%$\sum_{j=1}^{m} wt(\bm{y}_j) = kT$である．
$\sum_{j=1}^{m} wt(\bm{y}_j) = kT$ holds.
Additionally,
%$\sum_{j=1}^{m} wt(\bm{y}_j) \leq m \cdot k(q-1) =
%\lfloor n/k \rfloor \cdot k(q-1)
%\leq n(q-1)$である．
$\sum_{j=1}^{m} wt(\bm{y}_j) \leq m \cdot k(q-1) =
\lfloor n/k \rfloor \cdot k(q-1)
\leq n(q-1)$.
%よって，
Consequently,
$kT \leq n(q-1)$ holds.
%より，
%From $t_{w} \leq T \leq n(q-1)/k$,
%$t_{w}$の上界$t_{ub}$は次のようになる．
We denote the upper bound on $t_w$ by $t_{ub}$.
Then
from $t_{w} \leq T \leq n(q-1)/k$
we have
\begin{equation}
  t_{ub} = n(q-1)/k.
  \label{tub}
\end{equation}

\section{Maximum number of unused cell levels in I-ILIFC}

%反転セルを使い切っていない状態で，
%I-ILIFC$(n,k,q,$\\$r)$による書き換えが実行できなくなった時点における，
%In this paper,
In this work,
it is assumed that a sufficient number of inversion cells are reserved
%so that
such that
the inversion cells are not
%used up
entirely consumed
whenever block erasure takes place.

When block erasure takes place, that is,
a writing operation
%which
that
minimizes the total number of cell level changes without the erasure
%can not be done,
is not possible,
we denote the state of slices by
$(\bm{y}_1 \mid \bm{y}_2 \mid \cdots \mid \bm{y}_m)$,
where $m = \lfloor (n-r)/k \rfloor$.
%データセル(スライス)の状態を
%$(\bm{y}_1 \mid \bm{y}_2 \mid \cdots \mid \bm{y}_m)$
%(ここで，$m = \lfloor (n-r)/k \rfloor$)
%とする．
%このとき，
%各スライス$\bm{y}_j$はfullになるまでに，
%まだ$\left( k(q-1) - wt(\bm{y}_j) \right)$回だけ
%値を増加させることができる．
Then
%each weight of slices
the weight of each of the slices
$wt(\bm{y}_j)$ can be increased
$\left( k(q-1) - wt(\bm{y}_j) \right)$ more times.
%このように使用されていないセルレベル数の総和
The sum of such unused cell levels 
$\sum_{j=1}^{m} \left( k(q-1) - wt(\bm{y}_j) \right)$
is
%called
termed
the number of unused cell levels.
%を未使用セルレベル数と呼ぶ．
%I-ILIFCの
%最悪時書き換え回数の下界を求めるために，
%まず本節では，
%未使用セルレベル数の最大値を示す．
In this section,
%in order to derive the lower bound on
%the worst-case number of writings by I-ILIFC,
we
%show
determine
the maximum number of unused cell levels.

%$\bm{v},\bm{v}' \in \{0,1\}^k (\bm{v} \not= \bm{v}')$に対して，
For $\bm{v},\bm{v}' \in \{0,1\}^k (\bm{v} \not= \bm{v}')$,
%$d = d_H(\bm{v},\bm{v}')$とする．
let $d$ be the Hamming distance between $\bm{v}$ and $\bm{v}'$.
%I-ILIFCによって情報系列$\bm{v}$を
%$\bm{v}'$に書き換えるとき，
If the data $\bm{v}$ is changed into $\bm{v}'$
%so that
such that
the total number of cell level changes is minimized,
from Theorem \ref{rewritingrules}
%as mentioned in the previous section,
the sum of the data cell level changes is $d$ if $d \leq (k+1)/2$
and $(k-d)$ if $d > (k+1)/2$.
%データセルの増加量は
%定理\ref{rewritingrules}より，
%$d > (k+1)/2$が成立するとき$(k-d)$，
%$d \leq (k+1)/2$が成立するとき$d$である．
%よって，
%データセルの増加量の最大値$\delta$は，
%$k$が偶数のとき次のようになる．
Hence, if $k$ is even,
the maximum sum of the data cell level changes $\delta$ is as follows.
\begin{eqnarray}
  \delta & = &
  \max \{ \max_{1 \leq d \leq k/2} d, \max_{k/2+1 \leq d \leq k} (k-d) \} \nonumber \\
  & = & \max \{ k/2, k/2-1 \} = k/2. \nonumber
\end{eqnarray}
%また，$k$が奇数のとき次のようになる．
Similarly, if $k$ is odd,
\begin{eqnarray}
  \delta & = & \max \{ \max_{1 \leq d \leq (k+1)/2} d, \max_{(k+1)/2+1 \leq d \leq k} (k-d) \} \nonumber \\
  & = & \max \{ (k+1)/2, (k+1)/2-2 \} = (k+1)/2. \nonumber
\end{eqnarray}
Therefore,
\[ \delta = \begin{cases}
     k/2 & (k\mbox{\,\,is even}) \\
     (k+1)/2 & (k\mbox{\,\,is odd})
   \end{cases}.
\]

For the state of slices
%right after a certain writing
$(\bm{y}_1 \mid \bm{y}_2 \mid \cdots \mid \bm{y}_m)$,
let $\alpha_1$ be the number of bits
%which
%that
%have no
without any
corresponding slice
and let $\alpha_2$ be the number of empty slices.
%このとき，
%「I-ILIFCによる書き換えが必ず実行できること」と
%「スライスに記録されている
%$k$ビット中任意の$\delta$ビットを反転させられること」，
%すなわち以下が成り立つこと
%とは同値である．
Then the next writing
%so
that minimizes the total number of cell level changes
can always be
%done
carried out
if and only if
the changes of any $\delta$ bits among $k$ data bits can be stored in slices,
that is,
%\begin{eqnarray}
%\alpha < \delta \mbox{のとき}
%& & \beta \geq \alpha \nonumber \\
%\alpha \geq \delta \mbox{のとき}
%& & \beta \geq \delta \nonumber
%\end{eqnarray}
\begin{eqnarray}
   (\alpha_1 < \delta \,\, \mbox{and} \,\, \alpha_2 \geq \alpha_1)
   \,\,\mbox{or}\,\,
   (\alpha_1 \geq \delta \,\, \mbox{and} \,\, \alpha_2 \geq \delta).
   \nonumber
\end{eqnarray}
%これは，次の条件と同値である．
%is equal to the maximum data cell level changes $\delta$.
This condition is equivalent to the following condition.
\begin{eqnarray}
   \alpha_2 \geq \min \{ \alpha_1, \delta \}.
   \nonumber
\end{eqnarray}
%したがって，
%I-ILIFCによる書き換えが実行できない場合があるための
%必要十分条件は次のようになる．
Therefore, block erasure may take place if and only if
%\begin{eqnarray}
%\alpha < \delta \mbox{のとき}
%& & \beta < \alpha
%\label{condition1} \\
%\alpha \geq \delta \mbox{のとき}
%& & \beta < \delta
%\label{condition2}
%\end{eqnarray}
\begin{eqnarray}
   \alpha_2 < \min \{ \alpha_1, \delta \}.
   \label{condition1}
\end{eqnarray}
%対応するスライスをもつビットの数は$(k-\alpha)$である．
The number of bits
%which
that
have the corresponding slice
is $(k-\alpha_1)$.
Let $\bm{y}_{j_{i}}$ be
the slice
%which corresponds
corresponding
to the $i$-th bit from the left
among
%such
$(k-\alpha_1)$
such
bits.
Since $\bm{y}_{j_{i}}$ is active,
$1 \leq wt(\bm{y}_{j_{i}}) \leq k(q-1)-1$.
%左から$i$ビット目$(1 \leq i \leq k-\alpha)$に対応するのは
%$j_i$個目のactiveスライス$(1 \leq j_i \leq m)$であるとすると，
%未使用セルレベル数$u$は次のようになる．
Then the number of unused cell levels is as follows.
\begin{equation}
   \sum_{i=1}^{k-\alpha_1} (k(q-1) - wt(\bm{y}_{j_i})) + \alpha_2 \cdot k(q-1).
   \label{unusedlevel1}
\end{equation}
%式(\ref{condition1})
%が成り立つときの
%式(\ref{unusedlevel1})の最大値を求める．
When (\ref{condition1}) holds,
the maximum number of unused cell levels
is derived.
%$\alpha,\beta$を固定したとき，
%$wt(\bm{y}_{j_1}) = \cdots = wt(\bm{y}_{j_{k-\alpha}}) = 1$のとき，
%$u$は最大となるから，
%Note that
%$wt(\bm{y}_{j_i}) \geq 1$ for any $i$.
For fixed values of $\alpha_1$ and $\alpha_2$,
the number of unused cell levels is maximized
when $wt(\bm{y}_{j_1}) = \cdots = wt(\bm{y}_{j_{k-\alpha_1}}) = 1$.
Hence, the maximum is expressed as follows.
\begin{equation}
   %u =
   (k - \alpha_1) (k(q-1) - 1) + \alpha_2 \cdot k(q-1).
   \nonumber
   %\label{unusedlevel2}
\end{equation}

\subsection*{When $\alpha_1 < \delta$}
From (\ref{condition1}), $\alpha_2 < \alpha_1$ holds.
For fixed $\alpha_1$,
when $\alpha_2 = \alpha_1 - 1$,
the maximum is expressed as follows.
\begin{eqnarray}
   & &
   (k - \alpha_1) (k(q-1) - 1) + (\alpha_1 - 1) \cdot k(q-1) \nonumber \\
   & = & (k-1) \cdot k(q-1) - k + \alpha_1.
   \nonumber
   %\label{unusedlevel3}
\end{eqnarray}
Therefore, when $\alpha_1 = \delta - 1$,
the maximum is as follows.
\begin{equation}
   (k-1) \cdot k(q-1) - k + \delta - 1.
   \label{unusedlevel4}
\end{equation}

\subsection*{When $\alpha_1 \geq \delta$}
From (\ref{condition1}), $\alpha_2 < \delta$ holds.
Similarly, for
%fixed
a constant value of
$\alpha_1$, when $\alpha_2 = \delta - 1$,
the maximum is expressed as follows.
\begin{equation}
   %u =
   (k - \alpha_1) (k(q-1) - 1) + (\delta - 1) \cdot k(q-1). \nonumber
   %\label{unusedlevel5}
\end{equation}
Hence, when $\alpha_1 = \delta$,
the maximum is as follows.
\begin{equation}
   (k-1) \cdot k(q-1) - k + \delta.
   \label{unusedlevel6}
\end{equation}
We have the following theorem.
\begin{thm}
\label{unusedlevelrule}
%Suppose that the writing by I-ILIFC$(n,k,q,r)$ is executed.
In the I-ILIFC$(n,k,q,r)$,
when block erasure takes place,
%during the writing by I-ILIFC$(n,k,q,r)$,
the number of unused cell levels $u$
satisfies the following inequality.
\[ u \leq (k-1) \cdot k(q-1) - k + \delta. \]
\end{thm}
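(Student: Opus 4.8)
The plan is to bound $u$ by maximizing the number of unused cell levels given by~(\ref{unusedlevel1}) over every cell state in which block erasure can occur; by the characterization~(\ref{condition1}), such states are precisely those satisfying $\alpha_2 < \min\{\alpha_1,\delta\}$. The first step is the reduction already used above: each active slice $\bm{y}_{j_i}$ satisfies $1 \le wt(\bm{y}_{j_i}) \le k(q-1)-1$, and the summands $k(q-1)-wt(\bm{y}_{j_i})$ in~(\ref{unusedlevel1}) vary independently, so for fixed $\alpha_1$ and $\alpha_2$ the quantity~(\ref{unusedlevel1}) attains its maximum when every active slice has weight $1$, namely $(k-\alpha_1)(k(q-1)-1) + \alpha_2\cdot k(q-1)$. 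It then remains to maximize this over the admissible pairs $(\alpha_1,\alpha_2)$.

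I would then split according to which of $\alpha_1$ and $\delta$ is smaller. If $\alpha_1 < \delta$, then~(\ref{condition1}) forces $\alpha_2 \le \alpha_1-1$; taking $\alpha_2 = \alpha_1-1$ gives $(k-1)k(q-1)-k+\alpha_1$, which is increasing in $\alpha_1$ and hence largest at $\alpha_1=\delta-1$, yielding bound~(\ref{unusedlevel4}). If $\alpha_1 \ge \delta$, then~(\ref{condition1}) forces $\alpha_2 \le \delta-1$; taking $\alpha_2=\delta-1$ gives an expression whose coefficient of $\alpha_1$ is $-(k(q-1)-1)<0$, so it is decreasing in $\alpha_1$ and largest at $\alpha_1=\delta$, yielding bound~(\ref{unusedlevel6}). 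Comparing the two, and observing that~(\ref{unusedlevel6}) exceeds~(\ref{unusedlevel4}) by exactly $1$, the overall maximum is $(k-1)k(q-1)-k+\delta$, which is the asserted inequality $u \le (k-1)k(q-1)-k+\delta$.

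Because both per-case bounds~(\ref{unusedlevel4}) and~(\ref{unusedlevel6}) have already been derived in the preceding subsections, the remaining content is essentially this comparison, and I do not anticipate a real obstacle. The points I would verify with care are: that ``block erasure takes place'' genuinely implies~(\ref{condition1}), so that one may restrict the maximization of~(\ref{unusedlevel1}) to the region $\alpha_2<\min\{\alpha_1,\delta\}$ with $1\le wt(\bm{y}_{j_i})\le k(q-1)-1$; the two monotonicity claims in $\alpha_1$, i.e.\ that the claimed endpoint of each feasible interval is the maximizer rather than the minimizer; and the degenerate case $k(q-1)=1$, which would reverse the sign used when $\alpha_1\ge\delta$ but is excluded since the paper assumes that $k$ or $q-1$ is even.
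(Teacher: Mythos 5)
Your proposal is correct and follows essentially the same route as the paper: fix $\alpha_1,\alpha_2$, maximize~(\ref{unusedlevel1}) by setting all active slice weights to $1$, then optimize over the region $\alpha_2<\min\{\alpha_1,\delta\}$ by the same two-case split, arriving at~(\ref{unusedlevel4}) and~(\ref{unusedlevel6}) and taking the larger. The extra checks you flag (monotonicity directions and the excluded degenerate case $k(q-1)=1$) are sound but do not change the argument.
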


\section{Lower bounds on the number of writings by I-ILIFC}

In this section,
%we derive the lower bounds on the number of writings by
%I-ILIFC$(n,k,q,r)$ and
we show the lower bounds on the worst-case number of writings
by I-ILIFC.

Let $(\bm{y}_1 \mid \bm{y}_2 \mid \cdots \mid \bm{y}_m)$
be the state of slices,
%right after a certain writing
where $m = \lfloor (n-r)/k \rfloor$.
Then $\sum_{j=1}^{m} wt(\bm{y}_j)$
is
%called
termed
the number of used cell levels.
From Theorem \ref{unusedlevelrule},
we have the following theorem.
\begin{thm}
\label{usedlevelrule}
%反転セルを使い切っていない状態で，
%I-ILIFC$(n,k,q,r)$による書き換えが実行できなくなったとき，
In the I-ILIFC$(n,k,q,r)$,
when block erasure takes place,
%during the writing by I-ILIFC$(n,k,q,r)$,
%使用済みセルレベル数$u'$について以下が成り立つ．
the number of used cell levels $u'$ satisfies the following inequality.
\[ u' \geq (\lfloor (n-r)/k \rfloor-k+1) \cdot k(q-1) + k - \delta. \]
\end{thm}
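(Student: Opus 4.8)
The plan is to derive this bound directly from Theorem~\ref{unusedlevelrule} by exploiting the fact that, for the fixed block of $m=\lfloor(n-r)/k\rfloor$ slices, the used and the unused cell levels are complementary quantities whose sum is fixed. First I would observe that each slice $\bm{y}_j$ can carry a weight of at most $k(q-1)$, so that by the definitions of the previous section the number of unused cell levels is $u=\sum_{j=1}^{m}\bigl(k(q-1)-wt(\bm{y}_j)\bigr)$ and the number of used cell levels is $u'=\sum_{j=1}^{m}wt(\bm{y}_j)$. Adding these two sums, the weights cancel and we obtain the identity
\[
   u' + u = m\cdot k(q-1).
\]

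Next I would apply Theorem~\ref{unusedlevelrule}, which asserts that whenever block erasure takes place we have $u\le (k-1)\cdot k(q-1)-k+\delta$. Substituting this into the identity above and rearranging yields
\[
   u' = m\cdot k(q-1)-u \ge m\cdot k(q-1)-\bigl((k-1)\cdot k(q-1)-k+\delta\bigr),
\]
and collecting the $k(q-1)$ terms gives $u'\ge (m-k+1)\cdot k(q-1)+k-\delta$. Recalling that $m=\lfloor(n-r)/k\rfloor$, this is exactly the claimed inequality, so the proof would be complete.

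Since the argument is little more than a complementation of the previous theorem, I do not expect a genuine obstacle; the only point requiring care is the bookkeeping that makes the identity $u'+u=m\cdot k(q-1)$ hold exactly — one must count the capacity over precisely the $m$ slices of the scheme, so that the $r$ inversion cells and the up to $k-1$ leftover data cells (which are never organised into a slice) are correctly excluded from both $u$ and $u'$. Once that is pinned down, the lower bound on $u'$ follows immediately from the upper bound on $u$.
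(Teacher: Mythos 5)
Your proposal is correct and follows essentially the same route as the paper: both apply the bound $u\le (k-1)\cdot k(q-1)-k+\delta$ from Theorem~\ref{unusedlevelrule} together with the complementarity $u'+u=m\cdot k(q-1)$ over the $m=\lfloor(n-r)/k\rfloor$ slices and rearrange. Your remark about excluding the inversion cells and leftover data cells from the capacity count is a sensible point of care, but no further work is needed.
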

\begin{proof}
Let $(\bm{y}_1 \mid \bm{y}_2 \mid \cdots \mid \bm{y}_m)$
be the state of slices.
From Theorem \ref{unusedlevelrule},
\[ \sum_{j=1}^{m} \left( k(q-1) - wt(\bm{y}_j) \right)
\leq (k-1) \cdot k(q-1) - k + \delta. \]
Therefore,
\begin{eqnarray}
   u' = \sum_{j=1}^{m} wt(\bm{y}_j)
   \geq (m-k+1) \cdot k(q-1) + k - \delta, \nonumber
\end{eqnarray}
where $m = \lfloor (n-r)/k \rfloor$.
\end{proof}
\begin{cor}
\label{rewritablerule}
%使用済みセルレベル数$u'$について以下が成り立つとき，
%I-ILIFC$(n,k,q,r)$による次の書き換えは必ず実行できる．
If the number of used cell levels $u'$ satisfies the following inequality
\[ u' < (\lfloor (n-r)/k \rfloor-k+1) \cdot k(q-1) + k - \delta, \]
the next writing by I-ILIFC$(n,k,q,r)$ can always be
%done
executed
without block erasure.
%\[ u' < (\lfloor (n-r)/k \rfloor-k+1) \cdot k(q-1) + k - \delta \]
\end{cor}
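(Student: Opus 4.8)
The plan is to obtain the corollary as the contrapositive of Theorem~\ref{usedlevelrule}. Recall that, by the discussion culminating in~(\ref{condition1}), block erasure is incurred at a writing request if and only if the current state of slices satisfies $\alpha_2 < \min\{\alpha_1,\delta\}$, where $\alpha_1$ is the number of data bits with no corresponding slice and $\alpha_2$ is the number of empty slices. Theorem~\ref{usedlevelrule}, in turn derived from Theorem~\ref{unusedlevelrule}, shows that every such erasure-triggering state has used cell level count $u' \geq (\lfloor (n-r)/k \rfloor - k + 1)\cdot k(q-1) + k - \delta$.

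First I would fix an arbitrary reachable state $(\bm{y}_1 \mid \bm{y}_2 \mid \cdots \mid \bm{y}_m)$ of the slices immediately before a writing request and assume that $u' = \sum_{j=1}^m wt(\bm{y}_j)$ is strictly below the stated threshold. If this state were one in which block erasure takes place, Theorem~\ref{usedlevelrule} would force $u' \geq (\lfloor (n-r)/k \rfloor - k + 1)\cdot k(q-1) + k - \delta$, contradicting the assumption. Hence the state does not satisfy~(\ref{condition1}), i.e.\ $\alpha_2 \geq \min\{\alpha_1,\delta\}$, which is exactly the condition guaranteeing that the changes of any $\delta$ data bits---and therefore the update required by any admissible new data under the rewriting rule of Theorem~\ref{rewritingrules}---can be accommodated in the existing empty slices. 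Consequently the next writing is carried out without block erasure.

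The only point that needs care is that the threshold in Theorem~\ref{usedlevelrule} is genuinely a lower bound valid for \emph{every} erasure-triggering state, not merely for some particular one; this is already established, since Theorem~\ref{unusedlevelrule} maximizes the number of unused cell levels over all states satisfying~(\ref{condition1}) (taking $wt(\bm{y}_{j_i}) = 1$ for every active slice and $\alpha_1,\alpha_2$ at their extremal admissible values in each of the two cases $\alpha_1 < \delta$ and $\alpha_1 \geq \delta$). Given that, the strict inequality in the hypothesis is precisely what rules out the boundary case $u' = (\lfloor (n-r)/k \rfloor - k + 1)\cdot k(q-1) + k - \delta$, so no further estimation is required. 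I do not expect a real obstacle here: the content of the corollary is entirely contained in Theorem~\ref{usedlevelrule}, and the proof reduces to a one-line application of the contrapositive together with the equivalence in~(\ref{condition1}).
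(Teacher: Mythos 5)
Your proposal is correct and follows exactly the paper's route: the paper proves Corollary~\ref{rewritablerule} in one line as the contraposition of Theorem~\ref{usedlevelrule}, which is precisely your argument. The additional detail you give about condition~(\ref{condition1}) and the extremal states is a faithful unpacking of why Theorem~\ref{usedlevelrule} holds, not a different method.
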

\begin{proof}
This is the contraposition of Theorem \ref{usedlevelrule}.
\end{proof}

For fixed values of $n$, $k$ and $q$,
we define
\begin{eqnarray}
  U_1(r) & = & (\lfloor (n-r)/k \rfloor-k+1) \cdot k(q-1) + k - \delta,
  \nonumber \\
  U_1'(r) & = & ((n-r)/k -k+1) \cdot k(q-1) + k - \delta. \nonumber
\end{eqnarray}
Then we define $t_1(r) = \lceil U_1(r) / \delta \rceil$.
Let $r_1^{*}$ be the minimum integer $r$
%which
that
satisfies
$r(q-1) \geq U_1'(r)/\delta + 1$.
That is,
$r_1^{*}$ is the integer $r$
%which
that
satisfies $R_1 \leq r < R_1 + 1$,
where
\begin{equation}
   R_1 = \frac{n-k^2+k+k/(q-1)}{\delta + 1}. \nonumber
\end{equation}
%From the
The
restriction
%of
on
the ILIFC scheme
requires
$n \geq k^2 + r_1^{*}$
%should
to
be satisfied.
Note that
$n > k^2 + r_1^{*}$ is satisfied
if $n \geq k^2 + R_1 + 1$, that is,
\begin{equation}
   n \geq k^2 + \frac{k+1+k/(q-1)}{\delta} + 1
   \label{lengthcondition}
\end{equation}
holds.
Then from $R_1 > 0$,
$r_1^{*} \geq 1$ and $U_1(r_1^{*}) > 0$ hold.
In the following,
it is assumed that $(\ref{lengthcondition})$ is satisfied.

The following theorem holds.

\begin{thm}
\label{iilifclb1}
%I-ILIFC$(n,k,q,r_1^{*})$の書き換え回数$t_1^{*}$は，
%以下を満たす．
Let $t_1^{*}$ be the number of writings by I-ILIFC$(n,k,q,r_1^{*})$.
Then
\[ t_1^{*} \geq t_1(r_1^{*}). \]
\end{thm}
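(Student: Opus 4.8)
The plan is to bound, for an arbitrary data sequence, the number of writings produced in one round (between two consecutive block erasures, or up to the first one) below by $t_1(r_1^{*})$, by combining a per-writing growth bound on the cell levels with Corollary~\ref{rewritablerule}. Let $t_1^{*}$ be that number of writings, and let $(\bm{y}_1\mid\cdots\mid\bm{y}_m)$ be the slice state at the moment block erasure is incurred, with $\bm{b}$ the state of the inversion cells. Starting from the all-empty state, each writing raises $\sum_{j=1}^{m}wt(\bm{y}_j)$ by at most $\delta$, since by the computation of $\delta$ preceding Theorem~\ref{unusedlevelrule} the sum of the data cell level changes in a single writing is $d$ when $d\le(k+1)/2$ and $k-d$ when $d>(k+1)/2$, and both are at most $\delta$. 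Hence $\sum_{j=1}^{m}wt(\bm{y}_j)\le t_1^{*}\delta$ at the erasure moment.

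I would then distinguish two cases according to whether the inversion cells are exhausted when the erasure is incurred. If they are not exhausted, Theorem~\ref{unusedlevelrule}, and hence Corollary~\ref{rewritablerule}, is applicable; since block erasure is actually incurred, Corollary~\ref{rewritablerule} forces $\sum_{j}wt(\bm{y}_j)\ge U_1(r_1^{*})$, which together with $\sum_{j}wt(\bm{y}_j)\le t_1^{*}\delta$ gives $t_1^{*}\ge U_1(r_1^{*})/\delta$, and therefore $t_1^{*}\ge\lceil U_1(r_1^{*})/\delta\rceil=t_1(r_1^{*})$ because $t_1^{*}$ is an integer. If instead the inversion cells are exhausted at or before the erasure, then $wt(\bm{b})$ has reached $r_1^{*}(q-1)$, which requires $r_1^{*}(q-1)$ writings that change the storing mode, so $t_1^{*}\ge r_1^{*}(q-1)$; by the defining inequality of $r_1^{*}$, the bound $U_1(r_1^{*})\le U_1'(r_1^{*})$, and $\lceil U_1(r_1^{*})/\delta\rceil<U_1(r_1^{*})/\delta+1$, we obtain $r_1^{*}(q-1)\ge U_1'(r_1^{*})/\delta+1\ge U_1(r_1^{*})/\delta+1>t_1(r_1^{*})$, so $t_1^{*}\ge t_1(r_1^{*})$ in this case as well.

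The delicate part is the second case: choosing $r_1^{*}$ as the smallest $r$ with $r(q-1)\ge U_1'(r)/\delta+1$ is exactly what makes exhausting all $r_1^{*}(q-1)$ inversion cell levels already cost more writings than the asserted bound, so that the standing assumption that the inversion cells are not entirely consumed at an erasure is consistent with the statement and never the binding constraint. The remaining ingredients are routine: positivity of $U_1(r_1^{*})$ and the admissibility constraint $n\ge k^{2}+r_1^{*}$ needed for I-ILIFC$(n,k,q,r_1^{*})$ to be defined both follow from the standing hypothesis~(\ref{lengthcondition}) established just before the theorem, and the inequality $\sum_j wt(\bm{y}_j)\le t_1^{*}\delta$ is immediate from the definition of $\delta$.
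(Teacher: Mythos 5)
Your proof is correct and is essentially the paper's argument read in contrapositive: the paper runs a forward induction showing that each of the first $t_1(r_1^{*})$ writings is feasible (maintaining $wt(\bm{b}) \leq t < r_1^{*}(q-1)$ and $\sum_{j} wt(\bm{y}_j) \leq \delta t < U_1(r_1^{*})$ and invoking Corollary~\ref{rewritablerule}), whereas you count backwards from the erasure moment using the same three ingredients --- the per-writing bounds $\delta$ and $1$, Corollary~\ref{rewritablerule}, and the defining inequality of $r_1^{*}$. Your explicit case split on exhaustion of the inversion cells is folded into the paper's induction hypothesis, but the mathematical content is identical.
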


\begin{proof}
In the initial state
%(right after the block erasure)
(i.e., immediately after block erasure)
the number of used cell levels is $0$ and $0 < U_1(r_1^{*})$
holds.
Hence,
%from
Corollary \ref{rewritablerule} determines that
the first writing by I-ILIFC$(n,k,q,r_1^{*})$ can always be
%done.
executed.

For $t < t_1(r_1^{*})$
we suppose that $t$ writings by I-ILIFC$(n,k,q,r_1^{*})$ can
always
%be done
occur
without block erasure.
Let $(\bm{b} \mid \bm{y}_1 \mid \bm{y}_2 \mid \cdots \mid \bm{y}_m)$
be the state of inversion cells and slices after $t$ writings.
Then
$wt(\bm{b}) \leq t < t_1(r_1^{*}) < U_1(r_1^{*})/\delta + 1
\leq U_1'(r_1^{*})/\delta + 1
\leq r_1^{*}(q-1)$
%since
because
when one writing operation
%is done,
has occurred,
the maximum sum of the inversion cell level changes is $1$.
Hence, $r_1^{*}$ inversion cells are not used
%up.
in their entirety.
Additionally, $\sum_{j=1}^{m} wt(\bm{y}_j) \leq \delta t
< \delta \cdot U_1(r_1^{*})/\delta = U_1(r_1^{*})$.
Therefore,
%from
according to
Corollary \ref{rewritablerule},
the next $(t+1)$-th writing can always
%be done.
take place.

%From
The above discussion
serves to confirm that
$t_1(r_1^{*})$ writings by I-ILIFC$(n,k,q,r_1^{*})$ can
always be
%done
carried out
without erasure.
\end{proof}

We define
\begin{equation}
   U_{lb1}(r) = (n-k^2-r) (q-1) + k - \delta. \nonumber
\end{equation}
Then $t_1(r_1^{*}) = \lceil U_1(r_1^{*})/\delta \rceil \geq U_1(r_1^{*})/\delta
> U_{lb1}(r_1^{*})/\delta > U_{lb1}(R_1+1)/\delta$.
Let $t_{w1}^{*}$ be the worst-case number of writings by
I-ILIFC$(n,k,q,r_1^{*})$.
Since $t_{w1}^{*} \geq t_1(r_1^{*}) > U_{lb1}(R_1+1)/\delta$,
$U_{lb1}(R_1+1)/\delta$ is the lower bound on $t_{w1}^{*}$.

We calculate $t_{lb1}^{*} = U_{lb1}(R_1+1)/\delta$.
If $k$ is even,
\begin{equation}
   t_{lb1}^{*} = 2 \left( \frac{n-k^2-2}{k+2} - \frac{1}{k} \right) (q-1)
      + \frac{2k}{k+2} - 1.
   \label{tlbkeven}
\end{equation}
If $k$ is odd,
\begin{equation}
   t_{lb1}^{*}
   = 2 \left( \frac{n-k^2-3}{k+3} \right) (q-1)
      + \frac{2k}{k+3} - 1.
   \label{tlbkodd}
\end{equation}

We compare $t_{lb1}^{*}$ and $t_{ub}$,
where $t_{ub}$ is 
the upper bound on the worst-case number of writings by ILIFC$(n,k,q)$.
From (\ref{tub}),
%that is expressed by (\ref{tub}).
$t_{ub} = n(q-1)/k$.
If $t_{ub} < t_{lb1}^{*}$ then
$t_{w} \leq t_{ub} < t_{lb1}^{*} < t_{w1}^{*}$,
that is,
the worst-case number of writings by
I-ILIFC$(n,k,q,r_1^{*})$ is greater than that by ILIFC$(n,k,q)$.
Therefore,
%$t_{ub} < t_{lb1}^{*}$
$t_{lb1}^{*} > t_{ub}$
is
the sufficient condition for improving the worst-case performance
of ILIFC$(n,k,q)$.
For $k \geq 4$,
it is supposed that
$t_{lb1}^{*} > t_{ub}$ is equivalent to
$n > p_1$.
Then $p_1$ is a threshold of the code length $n$
that determines whether I-ILIFC$(n,k,q,r_1^{*})$
improves the performance of ILIFC$(n,k,q)$
in the worst case.
In this paper,
$p_1$ is simply referred to as
the threshold.
From (\ref{tlbkeven}) and (\ref{tlbkodd}),
%if $k$ is even and $k \geq 4$,
%$t_{ub} < t_{lb1}^{*}$ is as follows.
$p_1$ is derived as follows.
\begin{eqnarray}
  p_1 = \begin{cases}
    \frac{2(k^3+3k+2)}{k-2} - \frac{k}{q-1} & (k \mbox{ is even}) \\
    \frac{2k(k^2+3)}{k-3} - \frac{k}{q-1} & (k \mbox{ is odd})
  \end{cases}
  \label{lengthsufficiencycondition1}.
\end{eqnarray}
%Similarly
\if0
From (\ref{tlbkodd}),
if $k$ is odd and $k \geq 5$,
$t_{ub} < t_{lb1}^{*}$ is as follows.
\begin{eqnarray}
   n > \frac{2k(k^2+3)}{k-3} - \frac{k}{q-1}.
   \label{lengthsufficiencycondition2}
\end{eqnarray}
\fi
%From there results,
The results show that
%it can be seen that
the I-ILIFC$(n,k,q,r_1^{*})$
improves the worst-case performance of ILIFC$(n,k,q)$
if the code length $n$ is sufficiently large.

\section{Tight lower bound}

%So far,
Thus far,
%we assume
we assumed
that block erasure takes place when
%the
a
writing operation
%which
that
minimizes the total
number of
cell level changes
cannot be
%done
accomplished
by I-ILIFC.
However,
at the moment it
%is
remains
possible
%that
to carry out
a
writing operation that
does not minimize
%it
these changes.
%can be
%carried out.
In this paper, such a writing operation
is
%called
termed
%an
unusual.
%writing.
%In order to reduce the number of times of block erasures,
%if the unusual writing can be done without the erasure,
%it should be done before the erasure takes place.
The number of block erasures
can be reduced by ensuring that,
if
%the
an
unusual writing
operation
can be accomplished without
erasure,
it occurs before erasure takes place.
Therefore, in this section,
it is assumed that block erasure takes place if neither
%the
a
usual writing
operation
(which minimizes the total number of cell level changes)
nor
%the
an
unusual writing
operation
can
%be done.
occur.
Consequently,
%Then
we derive the tight lower bounds on the number of writings
under this assumption.
%in a similar way to the previous section.

\subsection{Maximum number of unused cell levels}

We
%show
determine
the maximum number of unused cell levels when
block erasure takes place.
We have the following theorem.

\begin{thm}
For the state of slices
%right after a certain writing
$(\bm{y}_1 \mid \bm{y}_2 \mid \cdots \mid \bm{y}_m)$,
let $\beta_1$ be the number of bits
%which
that
do not have
%the
a
corresponding slice and
let
$\beta_2$ be the number of empty slices.
Then block erasure may take place if and only if
$\lfloor \beta_1/2 \rfloor > \beta_2$.
\end{thm}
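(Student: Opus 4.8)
The plan is to reduce the statement to a clean count of empty slices. Recall first that a write request for new data $\bm{v}'$ can be served without block erasure by targeting either the normal mode, in which the slices must be made to decode (via $\mathcal{D}_s$) to $\bm{v}'$, or the inverted mode, in which they must decode to $\overline{\bm{v}'}$; since the inversion cells are assumed not to be exhausted, the writer may always pay the (at most one) inversion‑cell increase needed for whichever of the two targets is easier. So the first step is to characterize, for an arbitrary target vector $\bm{w}\in\{0,1\}^{k}$, when the slice state $(\bm{y}_1\mid\cdots\mid\bm{y}_m)$ can be updated without erasure so as to decode to $\bm{w}$.

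For this I would argue as in the earlier sections, tracking the ILIFC mechanics bit by bit. Consider a position $i$ at which the current decoding differs from $\bm{w}$. If bit $i$ currently has a corresponding (hence active) slice, then increasing the weight of that slice by one toggles $bv$ of the slice: the slice either stays active with the complementary value, or — only when flipping a $1$ to a $0$, namely when its weight reaches $k(q-1)$ — becomes full, which again gives value $0$; in no case is a bit that must be retained lost. If bit $i$ has no corresponding slice, then $\bm{w}$ can set it to $1$ only by reserving a currently empty slice for bit $i$ and raising its $i$‑th cell to level $1$, and conversely there is no other way to create a slice for bit $i$, because an active slice has a fixed index, a full slice represents no index, and cell levels cannot be decreased. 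Hence the target $\bm{w}$ is reachable without erasure if and only if the number of \emph{orphan} bits (those with no corresponding slice) that $\bm{w}$ sets to $1$ is at most $\beta_2$; all other required flips consume no empty slice.

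Writing $a(\bm{w})$ for the number of orphan positions set to $1$ by $\bm{w}$, the request $\bm{v}'$ can therefore be served without erasure if and only if $\min\{a(\bm{v}'),a(\overline{\bm{v}'})\}\le\beta_2$. On the $\beta_1$ orphan positions $\bm{v}'$ and $\overline{\bm{v}'}$ have complementary coordinates, so $a(\bm{v}')+a(\overline{\bm{v}'})=\beta_1$, giving $\min\{a(\bm{v}'),a(\overline{\bm{v}'})\}\le\lfloor\beta_1/2\rfloor$, with equality attained by any $\bm{v}'$ having exactly $\lfloor\beta_1/2\rfloor$ ones among the orphan positions; and when $\lfloor\beta_1/2\rfloor>\beta_2$ we have $\beta_1\ge2$, so such a $\bm{v}'$ can also be chosen different from the current data (which is constant on the orphan positions, being either all $0$ or all $1$ there). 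Consequently it fails to be the case that every write request can be served without erasure precisely when $\lfloor\beta_1/2\rfloor>\beta_2$; that is, block erasure may take place exactly in that case, as claimed.

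The main obstacle I anticipate is the careful verification in the second step — in particular making sure that increasing a slice's weight never destroys a bit the target still needs (the "becomes full" edge case, which relies on $k$ or $q-1$ being even so that full slices have even weight), and confirming that the free choice of storing mode genuinely lets the writer realize $\min\{a(\bm{v}'),a(\overline{\bm{v}'})\}$ rather than the mode forced by the cost‑minimizing rule of Theorem \ref{rewritingrules}; the remaining counting is elementary.
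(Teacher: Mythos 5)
Your proposal is correct and follows essentially the same route as the paper: both arguments reduce the question to how many empty slices must be reserved for the orphan bits under each of the two storing modes, observe that these two counts sum to $\beta_1$ so their minimum is at most $\lfloor\beta_1/2\rfloor$ (with equality attainable), and conclude the equivalence. Your write-up is somewhat more explicit about the slice mechanics (the full-slice edge case and the impossibility of re-indexing an active or full slice), which the paper's proof takes for granted, but the underlying counting argument is identical.
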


\begin{proof}
%At the beginning,
Initially,
we show that,
if $\lfloor \beta_1/2 \rfloor \leq \beta_2$ holds,
either the next usual or
unusual writing can always be
%done
executed
without block erasure.
It is supposed that
$l$ bits among $\beta_1$ bits
%which have no
without any corresponding slice
%corresponding slice
are changed on the data (not on the sequence stored in the slices)
where $0 \leq l \leq \beta_1$.
%Then if the changes of this $l$ bits can be stored in slices
%or the changes of $(\beta_1 - l)$ bits except for this $l$ bits can be
%stored in slices,
%either
\subsection*{When $l \leq \lfloor \beta_1/2 \rfloor$}
%From
The inequality
$l \leq \beta_2$
determines that
the changes
%of
that are made to
$l$ bits can be stored in $l$ slices
among $\beta_2$ empty slices.
Hence, writing
%which
that
does not change the mode can be
%done.
carried out.
\subsection*{When $l > \lfloor \beta_1/2 \rfloor$}
If $\beta_1$ is even,
$\beta_1 - l < \beta_1 - \beta_1/2 = \lfloor \beta_1/2 \rfloor \leq \beta_2$.
If $\beta_1$ is odd,
from $\beta_1 - l < \beta_1 - (\beta_1-1)/2 = (\beta_1 - 1)/2 + 1 =
\lfloor \beta_1/2 \rfloor + 1$,
$\beta_1 - l \leq \lfloor \beta_1/2 \rfloor \leq \beta_2$.
Hence, the mode is changed
and the changes
%of
related to
$(\beta_1-l)$ bits
can be stored in $(\beta_1-l)$ slices
among $\beta_2$ empty slices.
Therefore, writing that changes the mode can
%be done.
occur.

%From the above,
The above discussion shows that
%it was shown that
either
%the
usual
%writing
or
%the
unusual writing can be
%done
executed
if $\lfloor \beta_1/2 \rfloor \leq \beta_2$.

Next,
we show that
if $\lfloor \beta_1/2 \rfloor > \beta_2$ holds,
block erasure may take place.
It is supposed that
$\lfloor \beta_1/2 \rfloor$ bits among $\beta_1$ bits
are changed on the data.
%From
The inequality
$\lfloor \beta_1/2 \rfloor > \beta_2$
determines that
%in $\lfloor \beta_1/2 \rfloor$ bits
when writing that does not change the mode
%is done,
occurs,
there is a bit for which an empty slice cannot be reserved.
On the other hand,
if $\beta_1$ is even,
$\beta_1 - \lfloor \beta_1/2 \rfloor = \beta_1/2 = \lfloor \beta_1/2 \rfloor
> \beta_2$.
If $\beta_1$ is odd,
$\beta_1 - \lfloor \beta_1/2 \rfloor = \beta_1 - (\beta_1 - 1)/2 =
(\beta_1 - 1)/2 + 1 = \lfloor \beta_1 / 2 \rfloor + 1 >
\lfloor \beta_1 / 2 \rfloor > \beta_2$.
Hence,
$\beta_1 - \lfloor \beta_1/2 \rfloor > \beta_2$
determines that
when writing that changes the mode
takes place,
%in $\beta_1 - \lfloor \beta_1/2 \rfloor$ bits
there is a bit for which an empty slice cannot be reserved.

%From the above,
The above discussion indicates that
%it was shown that
block erasure may take place
if $\lfloor \beta_1/2 \rfloor > \beta_2$.
\end{proof}

%When $\lfloor \beta_1/2 \rfloor > \beta_2$ holds,
The number of unused cell levels is as follows
\begin{equation}
   \sum_{i=1}^{k-\beta_1} (k(q-1) - wt(\bm{y}_{j_i})) + \beta_2 \cdot k(q-1),
   \label{unusedlevel2}
\end{equation}
where $\bm{y}_{j_1},\cdots,\bm{y}_{j_{k-\beta_1}}$ are $(k-\beta_1)$ active slices,
that is, $1 \leq wt(\bm{y}_{j_i}) \leq k(q-1)-1$.
When $\lfloor \beta_1/2 \rfloor > \beta_2$ holds,
we derive the maximum number of unused cell levels.
For fixed values of $\beta_1$ and $\beta_2$,
the number of unused cell levels is maximized when
$wt(\bm{y}_{j_i}) = \cdots = wt(\bm{y}_{j_{k-\beta_1}}) = 1$.
Hence, the maximum is expressed as follows.
\begin{equation}
  (k-\beta_1) \left( k(q-1)-1 \right) + \beta_2 \cdot k(q-1). \nonumber
\end{equation}
\subsection*{When $\beta_1$ is even}
Then $\beta_1/2 > \beta_2$ holds.
For fixed $\beta_1$,
when $\beta_2 = \beta_1/2-1$,
the maximum is expressed as follows.
\begin{equation}
  \beta_1 \left( 1-k(q-1)/2 \right) + k \left( (k-1)(q-1) - 1 \right).
\end{equation}
Since $1-k(q-1)/2 \leq 0$ and
$\beta_1/2 - 1 \geq \beta_2 \geq 0$ hold,
when $\beta_2 = 0$ and $\beta_1 = 2$,
the maximum is as follows.
\begin{equation}
  (k-2) \cdot k(q-1) - k + 2. \label{unusedlevel2-1}
\end{equation}
\subsection*{When $\beta_1$ is odd}
Then $(\beta_1-1)/2 > \beta_2$ holds.
%Similarly,
For a constant value of $\beta_1$,
when $\beta_2 = (\beta_1-1)/2 - 1$,
the maximum is expressed as follows.
\begin{equation}
  \beta_1 \left( 1 - k(q-1)/2 \right) + \left(k - 3/2 \right) \cdot k(q-1) - k.
  \nonumber
\end{equation}
Similarly, since $(\beta_1-1)/2-1 \geq \beta_2 \geq 0$ holds,
when $\beta_2 = 0$ and $\beta_1 = 3$,
the maximum is as follows.
\begin{equation}
  (k-3) \cdot k(q-1) -k+3. \label{unusedlevel2-2}
\end{equation}
From (\ref{unusedlevel2-1}) and (\ref{unusedlevel2-2}),
we have the following theorem.
%\begin{eqnarray}
%   (k-2) \cdot (k(q-1)-1). \nonumber
%\end{eqnarray}
%We have the following theorem and corollary.
\begin{thm}
\label{unusedlevelrule2}
In the I-ILIFC$(n,k,q,r)$,
when block erasure takes place,
the number of unused cell levels $u$
satisfies the following inequality.
\[ u \leq (k-2) \cdot k(q-1) -k+2. \]
\end{thm}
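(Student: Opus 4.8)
The plan is to assemble the case analysis that has already been set up and simply take the larger of the two resulting bounds. By the previous theorem, block erasure can occur only when $\lfloor \beta_1/2 \rfloor > \beta_2$, so it suffices to maximize the number of unused cell levels (\ref{unusedlevel2}) subject to this single constraint. First I would recall that, for fixed $\beta_1$ and $\beta_2$, the expression (\ref{unusedlevel2}) is maximized by making every active slice as light as possible, i.e. $wt(\bm{y}_{j_1}) = \cdots = wt(\bm{y}_{j_{k-\beta_1}}) = 1$, which gives $(k-\beta_1)(k(q-1)-1) + \beta_2 \cdot k(q-1)$; this is exactly the quantity that was pushed further in deriving (\ref{unusedlevel2-1}) and (\ref{unusedlevel2-2}).

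Next I would split on the parity of $\beta_1$. When $\beta_1$ is even the constraint becomes $\beta_1/2 > \beta_2$; since the coefficient $1 - k(q-1)/2$ of $\beta_1$ in the resulting expression is nonpositive and $\beta_2$ may be taken as small as $0$, the maximum is attained at $\beta_1 = 2$, $\beta_2 = 0$, yielding (\ref{unusedlevel2-1}), namely $(k-2)\cdot k(q-1) - k + 2$. When $\beta_1$ is odd the constraint becomes $(\beta_1-1)/2 > \beta_2$, and the same monotonicity argument forces $\beta_1 = 3$, $\beta_2 = 0$, yielding (\ref{unusedlevel2-2}), namely $(k-3)\cdot k(q-1) - k + 3$. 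Comparing the two candidates, $\big((k-2) k(q-1) - k + 2\big) - \big((k-3) k(q-1) - k + 3\big) = k(q-1) - 1 \geq 0$ since $k \geq 1$ and $q \geq 2$, so the even case dominates and $u \leq (k-2)\cdot k(q-1) - k + 2$ holds in all cases, which is the claim.

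I do not expect a real obstacle here: the substantive work — characterizing when erasure can occur and maximizing (\ref{unusedlevel2}) in each parity class — is already done in the preceding discussion, and the theorem is just the consolidation step. The only point deserving a word of care is admissibility of the optimizing configuration, i.e. that $\beta_1 = 2$ (resp.\ $\beta_1 = 3$) does not exceed $k$; this holds for $k \geq 2$ (resp.\ $k \geq 3$), and for the remaining tiny values of $k$ the bound is trivial or vacuous, so the statement stands unchanged.
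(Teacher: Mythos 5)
Your proposal is correct and follows essentially the same route as the paper: the paper likewise maximizes (\ref{unusedlevel2}) under $\lfloor \beta_1/2 \rfloor > \beta_2$ by setting all active-slice weights to $1$, splits on the parity of $\beta_1$ to obtain (\ref{unusedlevel2-1}) and (\ref{unusedlevel2-2}) at $(\beta_1,\beta_2)=(2,0)$ and $(3,0)$, and takes the larger bound. Your explicit verification that the difference of the two candidates is $k(q-1)-1 \geq 0$ is left implicit in the paper but is the same consolidation step.
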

From Theorem \ref{unusedlevelrule2},
we have the following theorem.
\begin{thm}
\label{usedlevelrule2}
%反転セルを使い切っていない状態で，
%I-ILIFC$(n,k,q,r)$による書き換えが実行できなくなったとき，
In the I-ILIFC$(n,k,q,r)$,
when block erasure takes place,
%during the writing by I-ILIFC$(n,k,q,r)$,
%使用済みセルレベル数$u'$について以下が成り立つ．
the number of used cell levels $u'$ satisfies the following inequality.
\[ u' \geq \left( \left\lfloor (n-r)/k \right\rfloor -k+2 \right) \cdot
  k(q-1) + k - 2. \]
\end{thm}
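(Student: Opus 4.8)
The plan is to derive Theorem~\ref{usedlevelrule2} as an immediate corollary of Theorem~\ref{unusedlevelrule2}, mirroring exactly the argument used to obtain Theorem~\ref{usedlevelrule} from Theorem~\ref{unusedlevelrule}. The key observation is that the total cell-level capacity of the $m = \lfloor (n-r)/k \rfloor$ slices is fixed: each slice $\bm{y}_j$ can hold a weight of at most $k(q-1)$, so $\sum_{j=1}^{m} k(q-1) = m \cdot k(q-1)$ is the maximum possible value of $\sum_{j=1}^{m} wt(\bm{y}_j)$. The number of used cell levels and the number of unused cell levels therefore partition this capacity: $u' + u = m \cdot k(q-1)$, where $u' = \sum_{j=1}^{m} wt(\bm{y}_j)$ and $u = \sum_{j=1}^{m}(k(q-1) - wt(\bm{y}_j))$.

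First I would let $(\bm{y}_1 \mid \bm{y}_2 \mid \cdots \mid \bm{y}_m)$ be the state of slices when block erasure takes place. By Theorem~\ref{unusedlevelrule2}, the unused cell levels satisfy
\[
\sum_{j=1}^{m} \left( k(q-1) - wt(\bm{y}_j) \right) \leq (k-2) \cdot k(q-1) - k + 2.
\]
Then I would subtract both sides from $m \cdot k(q-1)$ to flip the bound into a lower bound on the used cell levels:
\[
u' = \sum_{j=1}^{m} wt(\bm{y}_j) \geq m \cdot k(q-1) - \left( (k-2) \cdot k(q-1) - k + 2 \right) = (m - k + 2) \cdot k(q-1) + k - 2,
\]
and finally substitute $m = \lfloor (n-r)/k \rfloor$ to obtain the claimed inequality.

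There is essentially no obstacle here; the content is entirely carried by Theorem~\ref{unusedlevelrule2}, and this step is a one-line algebraic rearrangement together with the definitional identity linking used and unused cell levels. The only point worth stating carefully is that $\sum_{j=1}^m wt(\bm{y}_j) + \sum_{j=1}^m (k(q-1) - wt(\bm{y}_j)) = m \cdot k(q-1)$ identically, so no inequality is lost in passing from the unused-level bound to the used-level bound. The proof is the exact analogue of the proof of Theorem~\ref{usedlevelrule} with $\delta$ replaced by $2$ throughout, reflecting that allowing unusual writings improves the relevant parameter from $\delta$ down to $2$.
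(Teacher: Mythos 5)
Your proposal is correct and follows exactly the same route as the paper: invoke Theorem~\ref{unusedlevelrule2} to bound the unused cell levels, then use the identity $u' + u = m\cdot k(q-1)$ with $m = \lfloor (n-r)/k\rfloor$ to convert it into the stated lower bound on $u'$. No differences worth noting.
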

\begin{proof}
Let $(\bm{y}_1 \mid \bm{y}_2 \mid \cdots \mid \bm{y}_m)$
be the state of slices.
From Theorem \ref{unusedlevelrule2},
\[ \sum_{j=1}^{m} \left( k(q-1) - wt(\bm{y}_j) \right)
\leq (k-2) \cdot k(q-1) - k + 2. \]
Therefore,
\begin{eqnarray}
   u' = \sum_{j=1}^{m} wt(\bm{y}_j)
   \geq (m-k+2) \cdot k(q-1) + k - 2, \nonumber
\end{eqnarray}
where $m = \lfloor (n-r)/k \rfloor$.
\end{proof}
\begin{cor}
\label{rewritablerule2}
%使用済みセルレベル数$u'$について以下が成り立つとき，
%I-ILIFC$(n,k,q,r)$による次の書き換えは必ず実行できる．
If the number of used cell levels $u'$ satisfies the following inequality
\[ u' < \left( \left\lfloor (n-r)/k \right\rfloor -k+2 \right) \cdot
  k(q-1) + k - 2, \]
either 
the next usual or unusual writing
by I-ILIFC$(n,k,q,r)$ can always be
%done
executed
without block erasure.
%\[ u' < \left( \left\lfloor (n-r)/k \right\rfloor -k+2 \right) \cdot
  %k(q-1) + k - 2. \]
\end{cor}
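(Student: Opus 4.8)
The plan is to obtain this corollary exactly as Corollary~\ref{rewritablerule} was obtained from Theorem~\ref{usedlevelrule}, namely by taking the contrapositive of Theorem~\ref{usedlevelrule2}. Recall that, under the convention adopted in this section, block erasure takes place precisely when neither a usual writing operation (one minimizing the total number of cell level changes) nor an unusual writing operation can be carried out without erasure; moreover, by the standing assumption that a sufficient number of inversion cells are reserved, the inversion cells are never exhausted, so the sole possible obstruction to performing a writing is the unavailability of enough empty slices for the bits that must become newly represented. Theorem~\ref{usedlevelrule2} asserts that at any such moment of block erasure the number of used cell levels $u'$ satisfies $u' \geq \left(\lfloor (n-r)/k\rfloor - k + 2\right)k(q-1) + k - 2$.

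First I would negate this implication: if instead $u' < \left(\lfloor (n-r)/k\rfloor - k + 2\right)k(q-1) + k - 2$, then the current state of the cells cannot be one at which block erasure takes place. By the characterization of that event recalled above, this is the same as saying that at least one of a usual writing operation or an unusual writing operation can be executed without block erasure, which is exactly the statement of the corollary.

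The only point requiring care is to verify that the event ``block erasure takes place'' occurring in the hypothesis of Theorem~\ref{usedlevelrule2} is indeed the event governed by the weaker erasure rule of this section, so that its negation yields ``some usual \emph{or} unusual writing is possible'' rather than merely ``some usual writing is possible.'' This matches because Theorem~\ref{unusedlevelrule2}, on which Theorem~\ref{usedlevelrule2} rests, was derived under the condition $\lfloor \beta_1/2 \rfloor > \beta_2$, which is precisely the condition characterizing the states where \emph{neither} a usual nor an unusual writing can be accommodated. I do not anticipate any substantive obstacle beyond this alignment of definitions; the argument is purely the logical contraposition of the preceding theorem.
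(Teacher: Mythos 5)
Your proposal is correct and matches the paper's proof, which likewise obtains the corollary as the contraposition of Theorem~\ref{usedlevelrule2}. The additional remark aligning the erasure event of this section with the condition $\lfloor \beta_1/2 \rfloor > \beta_2$ is a sound (if unstated in the paper) sanity check, not a deviation.
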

\begin{proof}
This is the contraposition of Theorem \ref{usedlevelrule2}.
\end{proof}

\subsection{Lower bounds on the worst case number of writings}
For fixed $n$, $k$, and $q$,
we define
\begin{eqnarray}
  U_2(r) & = &
  \left( \left\lfloor (n-r)/k \right\rfloor -k+2 \right) \cdot
  k(q-1) + k - 2, \nonumber \\
  U_2'(r) & = &
  \left( (n-r)/k -k+2 \right) \cdot
  k(q-1) + k - 2. \nonumber
\end{eqnarray}
Then we define $t_2(r) = \lceil (U_2(r)-U_1(r)-\delta+1)/(k-1) \rceil$.
Let $r_2^{*}$ be the minimum integer $r$
%which
that
satisfies
$r(q-1) \geq U_1'(r)/\delta + (U_2'(r)-U_1'(r)-\delta+1)/(k-1) + 2$.
That is, $r_2^{*}$ is the integer $r$
%which
that
satisfies
$R_2 \leq r < R_2 + 1$, where
\begin{eqnarray}
  & & R_2 \nonumber \\
  & = &
  \frac{1}{\delta + 1} \times \nonumber \\
  & & \left(
  n - k^2 + k +
  \frac{k+\delta}{q-1} + \frac{k\delta}{k-1} -
  \frac{\delta}{(q-1)(k-1)} \right). \nonumber
\end{eqnarray}
In order to satisfy $n \geq k^2 + r_2^{*}$,
it is assumed that $n \geq k^2 + R_2 + 1$, that is,
\begin{eqnarray}
  & & n \nonumber \\
  & \geq & k^2 + \frac{1}{\delta}
  \left( k + \frac{k+\delta}{q-1} + \frac{k\delta}{k-1}
  - \frac{\delta}{(q-1)(k-1)} \right) \nonumber \\
  & & + \frac{\delta+1}{\delta}
  \label{lengthcondition2}
\end{eqnarray}
is satisfied.
%This condition can be derived in a similar way to the previous section.
%If (\ref{lengthcondition2}) is satisfied,
From (\ref{lengthcondition2}),
%Additionally, it can be proved that
we have
$r_2^{*} \geq 1$ and
$U_1(r_2^{*}) > 0$.
%if (\ref{lengthcondition2}) is satisfied.
The following theorem holds.

\begin{thm}
%I-ILIFC$(n,k,q,r_2^{*})$の書き換え回数$t_2^{*}$は，
%以下を満たす．
Let $t_2^{*}$ be the number of writings by I-ILIFC$(n,k,q,r_2^{*})$.
Then
\[ t_2^{*} \geq t_1(r_2^{*}) + t_2(r_2^{*}). \]
\end{thm}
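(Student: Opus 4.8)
The plan is to partition the run between two consecutive block erasures into two phases: a first phase of ordinary (usual) writings, treated exactly as in Theorem \ref{iilifclb1}, and a second phase of usual-or-unusual writings, treated with Corollary \ref{rewritablerule2}.

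\emph{Phase 1.} First I would repeat the proof of Theorem \ref{iilifclb1} with only notational changes, replacing $r_1^{*}$ by $r_2^{*}$. Condition (\ref{lengthcondition2}) gives $U_1(r_2^{*})>0$, and since $U_2(r_2^{*})-U_1(r_2^{*})=k(q-1)+\delta-2$, the defining inequality of $r_2^{*}$ yields $t_1(r_2^{*})+t_2(r_2^{*})\le r_2^{*}(q-1)$, so throughout the first $t_1(r_2^{*})+t_2(r_2^{*})$ writings the inversion cells are never exhausted and a mode change is always available. As long as the number of used cell levels stays below $U_1(r_2^{*})$, Corollary \ref{rewritablerule} makes the next usual writing possible without erasure, and a usual writing raises the used cell levels by at most $\delta$; the induction of Theorem \ref{iilifclb1} thus produces $t_1(r_2^{*})$ usual writings with no erasure, after which the used cell levels total at most $\delta\lceil U_1(r_2^{*})/\delta\rceil\le U_1(r_2^{*})+\delta-1$, which is strictly less than $U_2(r_2^{*})$.

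\emph{Phase 2.} Starting from that state, I would run a second induction based on Corollary \ref{rewritablerule2}: whenever the used cell levels are below $U_2(r_2^{*})$, the next writing, usual or unusual, can be carried out without erasure. The only new ingredient is the bound that any such writing raises the used cell levels by at most $k-1$. A usual writing flips at most $\delta\le k-1$ of the $k$ stored data bits, each flipped bit costing one cell level. An unusual writing performs the non-minimizing mode and hence flips $d$ or $k-d$ data bits, where $d=d_H(\bm{v},\bm{v}')$; the value $k$ is attained only for $d\in\{0,k\}$, but $d=0$ is not a writing and, when $d=k$, the minimizing realization is a pure mode change that flips no data bit, so it never forces an erasure and is performed as a usual writing rather than an unusual one. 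Hence $1\le d\le k-1$ for every unusual writing and at most $k-1$ cell levels are added. Since the used cell levels start Phase 2 at most $U_1(r_2^{*})+\delta-1$ and grow by at most $k-1$ per writing, writing number $t_1(r_2^{*})+j$ still occurs before the used cell levels can reach $U_2(r_2^{*})$ for every $j$ with $1\le j\le\lceil (U_2(r_2^{*})-U_1(r_2^{*})-\delta+1)/(k-1)\rceil=t_2(r_2^{*})$. Combining the two phases, at least $t_1(r_2^{*})+t_2(r_2^{*})$ writings occur before the first block erasure, i.e. $t_2^{*}\ge t_1(r_2^{*})+t_2(r_2^{*})$.

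\emph{Main obstacle.} Most of this is bookkeeping already carried out for Theorem \ref{iilifclb1}; the genuinely new point is the per-writing increment of $k-1$ in Phase 2, and in particular the exclusion of an unusual writing that flips all $k$ data bits. The resolution, as indicated above, is that flipping all $k$ bits means $\bm{v}'=\overline{\bm{v}}$, whose cheapest realization costs a single inversion-cell increment and zero data cell levels and never needs an erasure, so such a request is always served as a usual writing and never contributes $k$ to the used cell levels.
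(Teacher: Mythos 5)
Your proposal is correct and follows essentially the same route as the paper's proof: a first induction via Corollary \ref{rewritablerule} yielding $t_1(r_2^{*})$ usual writings with the used cell levels then at most $U_1(r_2^{*})+\delta-1$, followed by a second induction via Corollary \ref{rewritablerule2} with the per-writing increment bounded by $k-1$, including the same key observation that a request with $d_H(\bm{v},\bm{v}')=k$ is always served by a pure mode change and so no writing ever adds $k$ data cell levels. The bookkeeping on the inversion cells and the final arithmetic also match the paper's argument.
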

\begin{proof}
In the initial state,
the number of used cell levels is $0$ and $0 < U_1(r_2^{*})$
holds.
Hence,
from Corollary \ref{rewritablerule},
the first usual writing by I-ILIFC$(n,k,q,r_2^{*})$ can always be
%done.
carried out.

For $t < t_1(r_2^{*})$
we suppose that $t$ usual writings by I-ILIFC$(n,k,q,r_2^{*})$ can
always be
%done
executed
without block erasure.
Let $(\bm{b} \mid \bm{y}_1 \mid \bm{y}_2 \mid \cdots \mid \bm{y}_m)$
be the state of inversion cells and slices after $t$ writings.
Then 
since $wt(\bm{b}) \leq t < t_1(r_2^{*}) < t_1(r_2^{*}) + t_2(r_2^{*})
< U_1(r_2^{*})/\delta +
(U_2(r_2^{*})-U_1(r_2^{*})-\delta+1)/(k-1) + 2
\leq U_1'(r_2^{*})/\delta +
(U_2'(r_2^{*})-U_1'(r_2^{*})-\delta+1)/(k-1) + 2
\leq r_2^{*}(q-1)$,
$r_2^{*}$ inversion cells are not used
%up.
in their entirety.
Additionally,
$\sum_{j=1}^{m} wt(\bm{y}_j) \leq \delta t < \delta \cdot U_1(r_2^{*})/\delta
= U_1(r_2^{*})$.
Therefore,
%from
according to
Corollary \ref{rewritablerule},
the next $(t+1)$-th usual writing can always
%be done.
take place.

%From
The
above discussion indicates that
$t_1(r_2^{*})$ usual writings can always be
%done
executed
without block erasure.

Let $(\bm{b}' \mid \bm{y}_1' \mid \bm{y}_2' \mid \cdots \mid \bm{y}_m')$
be the state of inversion cells and slices after $t_1(r_2^{*})$
usual writings.
Then
$\sum_{j=1}^{m} wt(\bm{y}'_j) \leq \delta \cdot t_1(r_2^{*}) <
  \delta \cdot \left( U_1(r_2^{*})/\delta + 1 \right)
  = U_1(r_2^{*}) + \delta$.
Hence, $\sum_{j=1}^{m} wt(\bm{y}'_j) \leq U_1(r_2^{*}) + \delta - 1$.

For $t_1(r_2^{*}) \leq t' < t_1(r_2^{*}) + t_2(r_2^{*})$,
we suppose that
$t'$ writings by I-ILIFC$(n,k,q,r_2^{*})$
can always be
%done
accomplished
without block erasure.
Note that each of
$t_1(r_2^{*})+1,t_1(r_2^{*})+2,\cdots,t'$-th writings
%means
denotes
either
%the
a
usual or unusual writing
operation.
Let $(\bm{b}'' \mid \bm{y}_1'' \mid \bm{y}_2'' \mid \cdots \mid \bm{y}_m'')$
be the state of inversion cells and slices after $t'$
writings.
Then, since $wt(\bm{b}'') \leq t' < t_1(r_2^{*}) + t_2(r_2^{*}) < r_2^{*}(q-1)$,
$r_2^{*}$ inversion cells are not
%used up.
entirely used.
When either
%the
usual or unusual writing
%is done,
takes place,
the maximum sum of the data cell level changes is $(k-1)$.
Note that
%the
an
unusual writing
operation,
for which the sum of the data cell level changes is $k$,
is not
%done
executed
because such writing can occur by only changing the mode.
Hence,
$\sum_{j=1}^{m} wt(\bm{y}''_j) \leq
\sum_{j=1}^{m} wt(\bm{y}'_j) + (k-1) \cdot (t' - t_1(r_2^{*}))
\leq U_1(r_2^{*}) + \delta - 1 + (k-1) \cdot (t_2(r_2^{*})-1)
< U_1(r_2^{*}) + \delta - 1 + (k-1) \cdot
(U_2(r_2^{*})-U_1(r_2^{*})-\delta+1)/(k-1) = U_2(r_2^{*})$.
Therefore,
%from
Corollary \ref{rewritablerule2} determines that
the next $(t'+1)$-th writing,
that is,
the next usual or unusual writing can always
%be done.
take place.

Therefore, $\left( t_1(r_2^{*})+t_2(r_2^{*}) \right)$ writings
by I-ILIFC$(n,k,q,r_2^{*})$
can always
%be done
occur
without block erasure.
\end{proof}

We define
\begin{equation}
  U_{lb2}(r) = (n-r-k^2+k)(q-1) + k - 2. \nonumber
\end{equation}
Then $t_1(r_2^{*}) + t_2(r_2^{*}) =
\lceil U_1(r_2^{*})/\delta \rceil +
\lceil (U_2(r_2^{*})-U_1(r_2^{*}) - \delta + 1)/(k-1) \rceil
\geq U_1(r_2^{*})/\delta + (U_2(r_2^{*})-U_1(r_2^{*}) - \delta + 1)/(k-1)
> U_{lb1}(r_2^{*})/\delta + (U_{lb2}(r_2^{*})-U_{lb1}(r_2^{*}) - \delta + 1)/(k-1)
= U_{lb1}(r_2^{*})/\delta + (k(q-1)-1)/(k-1)
> U_{lb1}(R_2+1)/\delta + (k(q-1)-1)/(k-1)$.
We denote $U_{lb1}(R_2+1)/\delta + (k(q-1)-1)/(k-1)$
by $t_{lb2}^{*}$.
Let $t_{w2}^{*}$ be the worst-case number of writings by
I-ILIFC$(n,k,q,r_2^{*})$.
%We denote the lower bound on $t_{w2}^{*}$ by $t_{lb2}^{*}$.
Since $t_{w2}^{*} \geq t_1(r_2^{*}) + t_2(r_2^{*}) > t_{lb2}^{*}$,
$t_{lb2}^{*}$ is the lower bound on $t_{w2}^{*}$.
$t_{lb2}^{*}$ is as follows. If $k$ is even,
\begin{eqnarray}
  & & t_{lb2}^{*} \nonumber \\
  & = & \frac{2}{k+2} \left( n - k^2 + \frac{k^3-6k^2+2k+4}{2k(k-1)} \right) (q-1) \nonumber \\
  & & + \frac{k^2-6k+4}{(k-1)(k+2)}. \label{tlbkeven2}
\end{eqnarray}
If $k$ is odd,
\begin{eqnarray}
  & & t_{lb2}^{*} \nonumber \\
  & = & \frac{2}{k+3} \left( n - k^2 + \frac{k^3-4k^2+k+6}{2(k+1)(k-1)} \right) (q-1) \nonumber \\
  & & + \frac{k^2-7k+4}{(k+3)(k-1)}. \label{tlbkodd2}
\end{eqnarray}

If $t_{lb2}^{*} > t_{ub}$,
the worst-case performance of I-ILIFC$(n,k,q,r_{2}^{*})$ is
better than
%superior to
that of ILIFC$(n,k,q)$.
%From \ref{tlbkeven2},
For $k \geq 4$,
it is supposed that
$t_{lb2}^{*} > t_{ub}$ is equivalent to
$n > p_2$.
From (\ref{tlbkeven2}) and (\ref{tlbkodd2}),
the threshold $p_2$ is as follows.
%If $k$ is even
%and $k \geq 4$,
%$t_{ub} < t_{lb2}^{*}$ is as follows.
%\mathindent=0mm
\begin{eqnarray}
  & & p_2 \nonumber \\
  & = & \begin{cases}
    \frac{2k^4-3k^3+6k^2-2k-4}{(k-1)(k-2)}
    - \frac{k(k^2-6k+4)}{(k-1)(k-2)(q-1)} & (k \mbox{ is even}) \\
    \frac{k(2k^4-k^3+2k^2-k-6)}{(k+1)(k-1)(k-3)}
    - \frac{k(k^2-7k+4)}{(k-1)(k-3)(q-1)} & (k \mbox{ is odd})
  \end{cases}.
  \nonumber \\
  \label{lengthsufficiencycondition2}
\end{eqnarray}
%\mathindent=7mm

From (\ref{lengthsufficiencycondition1}) and
(\ref{lengthsufficiencycondition2}),
\begin{eqnarray}
  & & p_1 - p_2 \nonumber \\
  & = & \begin{cases}
    \frac{k^3(q-1)-3k^2+2k}{(k-1)(k-2)(q-1)} & (k \mbox{ is even}) \\
    \frac{(k^4+2k^3+k^2)(q-1)-3k^3-2k^2+k}{(k+1)(k-1)(k-3)(q-1)} &
    (k \mbox{ is odd})
  \end{cases}. \nonumber
\end{eqnarray}
Therefore, for $k \geq 4$,
we have $p_1 - p_2 > 0$, that is, $p_1 > p_2$.
This result shows that
I-ILIFC$(n,k,q,r_2^{*})$ improves
the worst-case performance of ILIFC$(n,k,q)$
also for $p_2 < n \leq p_1$.

\section{Conclusion}
%In this paper,
This paper has presented our derivation of
%we have derived
the lower bounds on the number of writings
by the I-ILIFC
and
%given
specified
the threshold
%of
for
the code length which
determines whether the I-ILIFC improves
the worst-case performance of the ILIFC.
The results have
shown that
the I-ILIFC is
better than
%superior to
the ILIFC in the worst case
if the code length is sufficiently large.
Additionally,
we have considered
%the
unusual writing
operations
in addition to
the usual writing
operation
by the I-ILIFC
and derived the tight lower bounds
%on
thereon.
%that.
%the number of writings.
Consequently,
the threshold could be made smaller than that in
the first lower bounds.

%the I-ILIFC with slightly shorter code length
%also improves the worst case performance of the ILIFC.

%Our future work is to analyze the average performance of
%the I-ILIFC.

% conference papers do not normally have an appendix

% use section* for acknowledgment
\section*{Acknowledgment}
%The authors would like to thank...
This work was partially supported by
the Advanced Storage Research Consortium
and JSPS KAKENHI Grant Number
15K00010.

% trigger a \newpage just before the given reference
% number - used to balance the columns on the last page
% adjust value as needed - may need to be readjusted if
% the document is modified later
%\IEEEtriggeratref{8}
% The "triggered" command can be changed if desired:
%\IEEEtriggercmd{\enlargethispage{-5in}}

% references section

% can use a bibliography generated by BibTeX as a .bbl file
% BibTeX documentation can be easily obtained at:
% http://mirror.ctan.org/biblio/bibtex/contrib/doc/
% The IEEEtran BibTeX style support page is at:
% http://www.michaelshell.org/tex/ieeetran/bibtex/
%\bibliographystyle{IEEEtran}
% argument is your BibTeX string definitions and bibliography database(s)
%\bibliography{IEEEabrv,../bib/paper}
%
% <OR> manually copy in the resultant .bbl file
% set second argument of \begin to the number of references
% (used to reserve space for the reference number labels box)

% that's all folks
\end{document}